\newif\if@restonecol  
\newtheorem{example}{Example}
\newtheorem{theorem}{Theorem}
\definecolor{gray}{RGB}{190,190,190}
\newcommand{\yl}[1]{{\color{black}{#1}}}
\newcommand{\cy}[1]{{\color{black}{#1}}}
\newcommand{\zyl}[1]{{\color{black}{#1}}}
\newcommand{\cs}[1]{{\color{black}{#1}}}
\newcommand{\zzzyl}[1]{{\color{black}{#1}}}
\begin{document}
%
\title{LSketch: A Label-Enabled Graph Stream Sketch Toward Time-Sensitive Queries}
%
%
%
%

\author{Yiling~Zeng,
        Chunyao~Song,
        Yuhan~Li,
        and~Tingjian~Ge 
\IEEEcompsocitemizethanks{\IEEEcompsocthanksitem Y. Zeng, C. Song, and Y. Li are with the Tianjin Key Laboratory of Network and Data Security Technology, College of
Computer Science, Nankai University, Tianjin 300350, China. E-mail: \{yilingzeng, yuhanli\}@mail.nankai.edu.cn, chunyao.song@nankai.edu.cn    

\IEEEcompsocthanksitem T. Ge is with the Department of Computer Science, University of Massachusetts Lowell. E-mail: ge@cs.uml.edu.}

}

\IEEEtitleabstractindextext{%
\begin{abstract}
Graph streams represent data interactions in real applications. The mining of graph streams plays an important role in network security, social network analysis, and traffic control, among others. However, the sheer volume and high dynamics cause great challenges for efficient storage and subsequent query analysis on them. Current studies apply sketches to summarize graph streams.
We propose LSketch that works for heterogeneous graph streams, which effectively preserves the label information carried by the streams in real scenes, thereby enriching the expressive ability of sketches.
In addition, as graph streams continue to evolve over time, edges too old may lose their practical significance. Therefore, we introduce the sliding window model into LSketch to eliminate the expired edges automatically.
LSketch uses sub-linear storage space and can support structure based queries and time-sensitive queries with high accuracy.
We perform extensive experiments over four real datasets, demonstrating the superiority of the proposed method over state-of-the-art methods, in aspects of query accuracy and time efficiency.
\end{abstract}

\begin{IEEEkeywords}
heterogeneous graph stream, graph sketch, sliding window, approximate query.
\end{IEEEkeywords}}

\maketitle

\IEEEdisplaynontitleabstractindextext

%
\IEEEpeerreviewmaketitle

\IEEEraisesectionheading{\section{Introduction}\label{sec:introduction}}

%
%
%
%
\IEEEPARstart{I}{n} this big data era, data is generated in real-world networks at ever-increasing high rates.
\yl{As of 1st quarter 2022, Facebook has 2.91 billion monthly active users and 79\% of monthly users are active daily \footnote{https://www.statista.com/statistics/264810/number-of-monthly-active-facebook-users-worldwide/}. These active users generate huge amounts of data on the platform.}
\yl{\cy{On this basis, }
a wide variety of real-world applications \cy{which can be modeled with graphs} 
have been extensively investigated \cite{aggarwal2011introduction,aggarwal2010managing,tsalouchidou2018scalable}. However, most of the past \cy{studies concentrated more }
on static graphs. In more recent years, streams on large-scale graph infrastructures have been proposed \cy{due to different }
scenario requirements of \cy{variant }
applications \cite{cormode2005space,feigenbaum2005graph,ganguly2006estimating,sarma2011estimating,aggarwal2010dense}.
}
\cy{They} are used to model graphs that are sequentially updated in the form of edges as an extension of static graphs, which simulate the continuous evolution of the networks in real scenarios.
Similar to static graph mining, the analytics of large-scale graph streams is of great practical significance.
For social networks, exploring the connections between nodes helps predict a user's potential friends or detect the source of fake news, for example. For transportation networks, congested road section prediction and route planning also benefit from such analyses. 

It is usually impossible to accurately compute the frequency of edges or nodes over enormous graph streams due to its sheer amount and excessive change rate. 
\yl{Accordingly, approximate queries \cy{serve to solve the problem }
and are receiving increasing attention.
In \cy{past studies, }
a series of 
data structures \cy{which work for approximate processing} have been proposed, including Bloom filters \cite{aggarwal2007data} and its variances \cite{aggarwal2011introduction,aggarwal2010managing,aggarwal2010dense}, and sketches \cite{alon1999space,buriol2006counting,bar2002reductions,estan2003new,roy2016augmented,yang2017pyramid}.
}
\yl{
\cy{Such} data structures, which can markedly reduce the storage space and get extremely fast answering speed with a limited loss of accuracy, are effective methods for big data applications\cy{, especially in streaming scenarios}.
}
\yl{At present, they are widely used for finding top-k items \cite{chang2003finding,jin2008sliding}\cy{;} finding heavy-hitters \cite{mirylenka2015conditional,ben2016heavy}\cy{;} approximate weight estimation \cite{manku2002approximate,zhou2019generalized}\cy{;} and triangle counting \cite{pavan2013counting,shin2017wrs}.
}

In addition, rich information is carried by nodes and edges in heterogeneous graphs. Ignoring such information can result in serious knowledge loss.
For example, in social networks, users can be treated as nodes, and edges represent the communications/interactions between them. Users can be assigned to different communities according to their interests and thus have various personal tags---labels on nodes. The communication between users can also be classified according to the communication intensity, such as frequent, medium, and infrequent ones. 
\yl{However, \cy{most }current sketch techniques 
only consider nodes and edges without labels, while the data in real applications are often labeled on nodes and edges. The few existing studies suffer from poor query accuracy and cannot meet the needs of realistic scenarios.
}

\yl{Furthermore, most current sketch techniques do not consider the influence of insertion rates, nor differentiate items arriving at different timestamps when maintaining the sketches.
On one hand, the update frequency of graph streams represents the active status of the network, so the sketch should be able to carry time related information.
On the other hand, items arriving at different timestamps are of different importance in practical applications. The closer to the current time, the more valuable an item is. 

In summary, both labels and timestamps are important features for real-world graph streams, thus we should construct a time-sensitive sketch that can work well for heterogeneous graph streams.
}

\subsection{Related Work} \label{relatedwork}
Due to the rapid growth of data, the graph stream model has been explored for data analytics and query processing. 
McGregor gives an excellent survey~\cite{mcgregor2014graph} of mostly theoretical work on graph streams.
The problem of synopsis construction has been widely studied for {\em data streams} in general. The early graph stream summary techniques apply linear projections of the data items with multiple hash functions into lower dimensional spaces to store each item independently, but ignore the connections between items. 
\yl{However, such sketches are not applicable to the case of graph data. Instead of considering the potential relevance of edges in the graph stream, they model\cy{ed} edges as a series of independent items, which \cy{cannot preserve }
the underlying structure of the graph stream.} Therefore, they can only support edge queries rather than any \cy{topology }
based queries. This line of work includes \yl{Count Sketch \cite{charikar2004finding}}, CountMin Sketch \cite{Cormode2005AnID}, gSketch \cite{Zhao2011gSketchOQ}, AMS \cite{alon1999space}, 
Ada-sketch \cite{shrivastava2016time} and so on.

To overcome \yl{the shortcomings of these early studies, graph sketches are proposed to improve the above classical sketches for graph streams}.
\zyl{TCM \cite{Tang2016GraphSS}, gMatrix \cite{khan2016query} and LGS \cite{Song2019LabeledGS}} use three-dimensional sketches to support graph structure based queries. They apply a hash function to map the vertex set of the graph into an integer within the size of the matrix width, thus the item (an edge) can be located into the matrix using hash values of its two endpoints. 
\cy{Although they are able to support graph topology based queries, they still suffer from poor accuracy. To address this problem, }
GSS \cite{Gou2019FastAA, gou2022graph} introduces a series of techniques, such as square hashing and multiple rooms, to take care of the uneven distribution of the vertex degrees, thereby increasing the capacity of the matrix and achieving the highest query accuracy thus far.

However, \zyl{most of the methods} introduced above are studied over {\em homogeneous} graph streams. By contrast, data in real applications often carry a lot of labeled information. 
Furthermore, those sketches are not time-sensitive and cannot tell the active status of the network.
The edges that are too old cannot be removed in time, which will affect subsequent queries.
In recent years, Hung et al. \cite{hung2008finding} study the problem of identifying items with heavy weights in the sliding window of a weighted data stream.
ECM-sketch \cite{papapetrou2012sketch} allows effective summarization of streaming data over both time-based and count-based sliding windows to answer potentially complex queries with probabilistic accuracy guarantees.
SBG-Sketch \cite{hassan2018sbg} summarizes labeled-graph streams and automatically balances sketch load with unpredictable and highly imbalanced edge-label frequencies. 
Extending TCM, LGS \cite{Song2019LabeledGS} preserves the timestamps of items by introducing the sliding window model and automatically handles edge expiration. Moreover, it stores vertex labels and edge labels efficiently, but its query accuracy is not high enough for practical usage. 

\zyl{In summary, we choose GSS, the sketch with the highest accuracy working for homogeneous graph streams, and LGS, the sketch with the highest accuracy working for heterogeneous graph streams, as the competitive methods to demonstrate the superiority of our sketch in retaining labeled and temporal information.}

\subsection{\cy{Our }\yl{Contributions}}
\cy{To solve the problems mentioned above}, it is of great practical significance to construct a graph sketch with 
sliding window\cy{s} 
over heterogeneous graph streams, which can support \cy{graph topology related queries }
and achieve high \cy{efficiency and } accuracy \cy{as well}. 



\yl{Our contributions are summarized as follows:}

1) We propose LSketch for heterogeneous graph streams, which stores vertex labels and edge labels using the idea of Storage Blocks Division and Dual Counters without occupying much extra storage space. 

2) We introduce a sliding window model consisting of $k$ subwindows into an LSketch. By tracking the start time of the latest subwindow, we can automatically handle edge expirations in an efficient way, \zyl{so that the sketch effectively stores the temporal information of the items.}

3) \yl{Specifically, we design two strategies for the encoding of vertex labels, \cy{named }uniform blocking and skewed blocking\cy{. Particularly, }
the skewed blocking strategy can adapt to the extreme unbalanced distribution of vertex labels \cy{and achieve similar high efficiency}.}

4) We show some sample implementations of the structure based queries on LSketch, and conduct extensive experiments on four real-world datasets. The results demonstrate that \zyl{under the theoretical guarantees,} LSketch outperforms LGS in \cs{terms of }query accuracy \cs{by 1 to 3 orders of magnitude, }\zyl{and \cs{still retain comparable }
query efficiency with it}. Furthermore, it supports more types of queries than GSS.

\section{Preliminaries} \label{sec:preliminaries}

\subsection{Heterogeneous Graph Stream} \label{sec:graphstreams}

\begin{figure}[htbp]
\vspace{-0.3em}
\setlength{\abovecaptionskip}{0.1cm}
\centering
\includegraphics[width=1.9in]{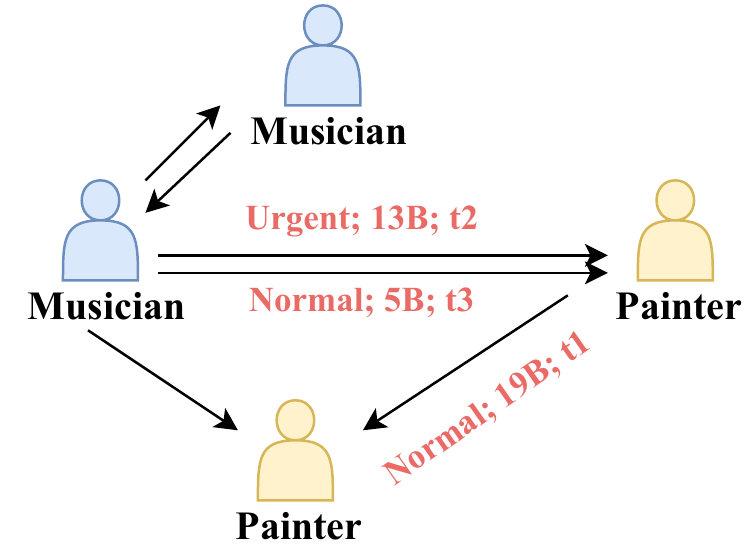} 
\caption{A heterogeneous graph stream in social networks} 
\label{fig:socialnetworks}
\vspace{-0.7em}
\end{figure}

A heterogeneous graph stream is a real-time updated graph that is constituted by edges arriving in the form of item $e=(A,B;l_A,l_B,l_e;w;t)$, which means that the edge $(A, B)$ comes at time $t$ with weight $w$. In addition, $l_A$ and $l_B$ denote the label attributes of nodes A and B, respectively, and $l_e$ denotes the edge label.
As time goes on, items may appear multiple times with different weights. We take Figure \ref{fig:socialnetworks} as an example. In social networks, the interaction between users can be modeled as an edge, so the evolution among users' different interactions forms a graph stream, in which the user's community can act as its vertex label, such as musician or painter. Also, the edge label can stand for the urgency level selected by the user when sending the message. It implies that it is possible that edges with different labels appear between the same pair of nodes $A$ and $B$, with labels $l_A$ and $l_B$, respectively. 
A message corresponds to an item/edge in the graph stream, with its weight being the length of the message. 
\zyl{The symbols frequently used in this paper are listed in Table \ref{tableG}. }
\begin{table}  
\caption{\zyl{List of symbols used in the paper} }
\centering
\begin{tabular}{ll}
\hline  
Symbol & Definition \\  
\hline  
$e $  &   The arriving item of the graph stream\\
$A/B $  &  The starting/ending node of item $e$  \\ 
$l_A/l_B$   &  The vertex label of node $A/B$ \\ 
$l_e$   &  The edge label of item $e$ \\ 
$w$ &  The weight of item $e$  \\
$t$ & The timestamp of item $e$ \\ 
$d$ & Width of the storage matrix  \\
$b$ & Width of the storage blocks  \\
$m$ & The index of block of a certain vertex label \\
$H(v)$ & The calculated hash value of node $v$  \\
$f(v)$ & The calculated fingerprint of node $v$  \\
$s(v)$ & The calculated address of node $v$  \\
$r$ & Length of the sampling sequence \\
$s$ & Length of the sampled cells \\
$l_i(v)$ & The address candidate list of node $v$ \\
$C$ & Counter $C$ works for queries without label restrictions \\
$P$ & Counter $P$ works for queries with label restrictions  \\
$c$ & The length of a predefined list of prime numbers\\
$W$ & Time units of the whole sliding window  \\
$W_s$ & Time units of one subwindow  \\
$k$ & The size of subwindows \\

\hline  
\end{tabular} 
\label{tableG}
\end{table}

Given the stream, we can define its underlying structural graph $G=(V, E)$, which is a dynamic directed graph that is continuously updated as each edge item arrives. Here, $V$ is the vertex set, and $E$ is the edge set.
For any vertex $v_i\in V$, there is an attached {\em vertex label} $l_V(v_i)$. Similarly, for any edge $e_i\in E$, there is an attached {\em edge label} $l_E(e_i)$. 
As mentioned earlier, edges may appear several times with either the same or different weights and labels in the stream; thus we allow multi-edges between the same pair of nodes.

Due to its sheer amount and high dynamicity, the storage and processing of graph streams are usually difficult to resolve. 
Traditional data structures, such as adjacency matrices, cannot be directly applied either.
Therefore, we need to construct a real-time graph summarization model, the graph sketch, to process graph streams with less storage space and faster response speed.

A graph sketch of the underlying structural graph $G=(V, E)$ is represented as $G_s=(V_s, E_s)$, where $V_s$ is the vertex set and $E_s$ is the edge set. 
Ideally, $V_s$ should be significantly smaller than $V$, and so is $E_s$ than $E$. 
In order to achieve the goal of reducing storage space and supporting structure based queries, a well-designed data structure is required to map edges in the graph stream to the sketch.

\subsection{\yl{GSS: A Novel Sketch Supporting Homogeneous Graph Streams}\cy{ with High Accuracy} } \label{sec:gss}
We first introduce the basic sketch \yl{GSS} \cite{gou2022graph} that can store homogeneous graph \yl{streams} in a much compressed way. 

\yl{GSS defines} the generated sketch as $G_s=(V_s, E_s)$, and a matrix of width $d$ forms the underlying data structure. 
When an item $e=(A,B;w)$ arrives in the data stream, \yl{it applies} a hash function $H(\cdot)$ and \yl{maps} the nodes into a value range $[0,D)$ (the value of D will affect the accuracy of the sketch). Then we get the hash values $H(A)$ and $H(B)$ corresponding to the two vertices, respectively.

\noindent
\textbf{Fingerprint Technique.} 
\begin{figure}[ht]
\centering
\setlength{\abovecaptionskip}{0.1cm} 
\setlength{\belowcaptionskip}{-0.2cm}   
\includegraphics[width=1.7in]{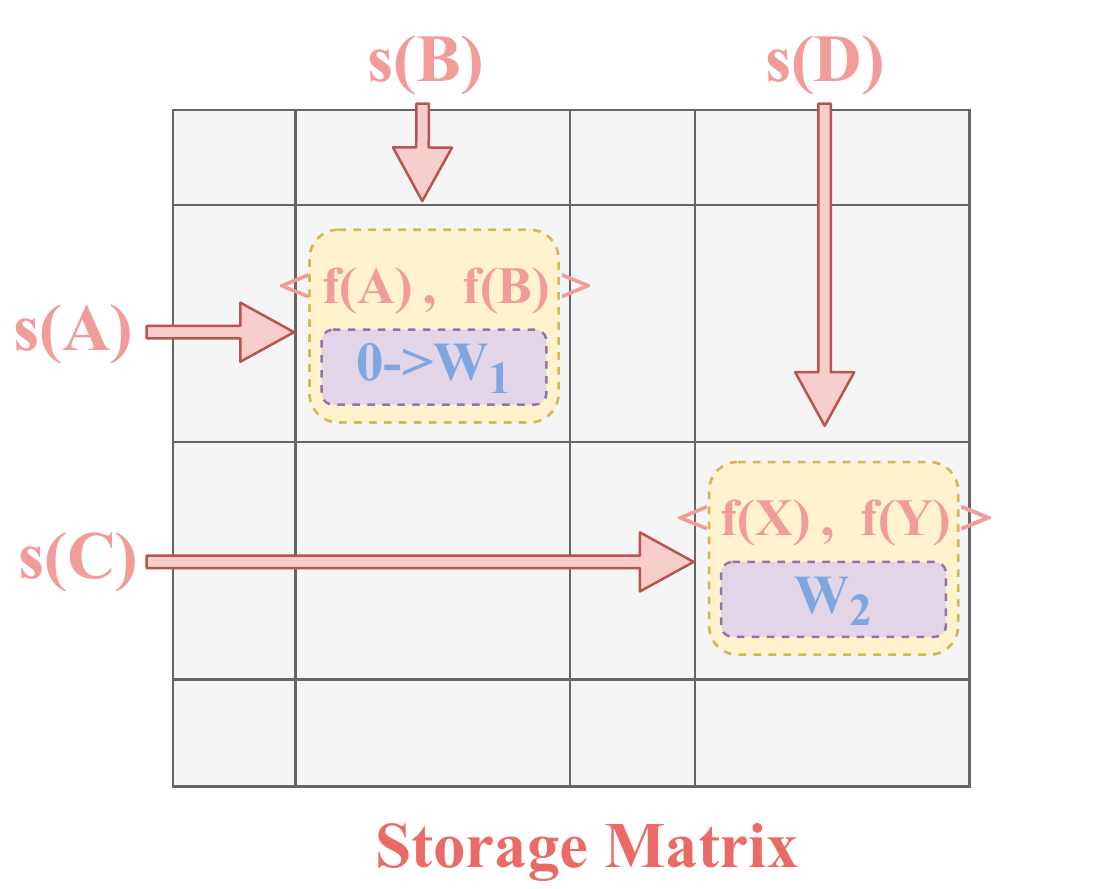} 
\caption{A running example of the Storage Matrix. Item $(A,B)$ inserts into an empty cell successfully, while item $(C,D)$ reaches an occupied cell, and the insertion fails.} 
\label{fig:storagematrix}
\end{figure}
\yl{GSS splits} the hash value $H(v)$ of vertex $v$ into an address $s(v) (0\leq s(v)\leq d)$ and a fingerprint $f(v)(0\leq f(v)\leq F)$, where $s(v)= \lfloor \frac{H(v)}{F} \rfloor$, $f(v)= H(v)\% F$, and $F$ denotes the size of the fingerprint (e.g., $F = 1024$ implies a 10-bit fingerprint). At the same time, \yl{it sets} $D$ to $d*F$, which is the largest hash range that a matrix of width $d$ can accurately express.
Through the address pair $(s(A),s(B))$, \yl{GSS} can locate the cell where the item should be stored in the matrix. Besides the weight $w$, \yl{it also stores} the fingerprint pair $(f(A),f(B))$ as a unique identifier to prevent the item from being overwritten due to hash conflicts in subsequent updates.

Figure \ref{fig:storagematrix} shows an example of the actual operations when items are processed.
\yl{If the cell is not empty or the two fingerprint pairs do not match, then the edge cannot be stored in the storage matrix. GSS adds another data structure called {\em Buffer} for such edges, as detailed below.}


\begin{figure}[htbp]
\centering
\setlength{\abovecaptionskip}{0.1cm} 
\setlength{\belowcaptionskip}{-0.2cm}   
\includegraphics[width=2.4in]{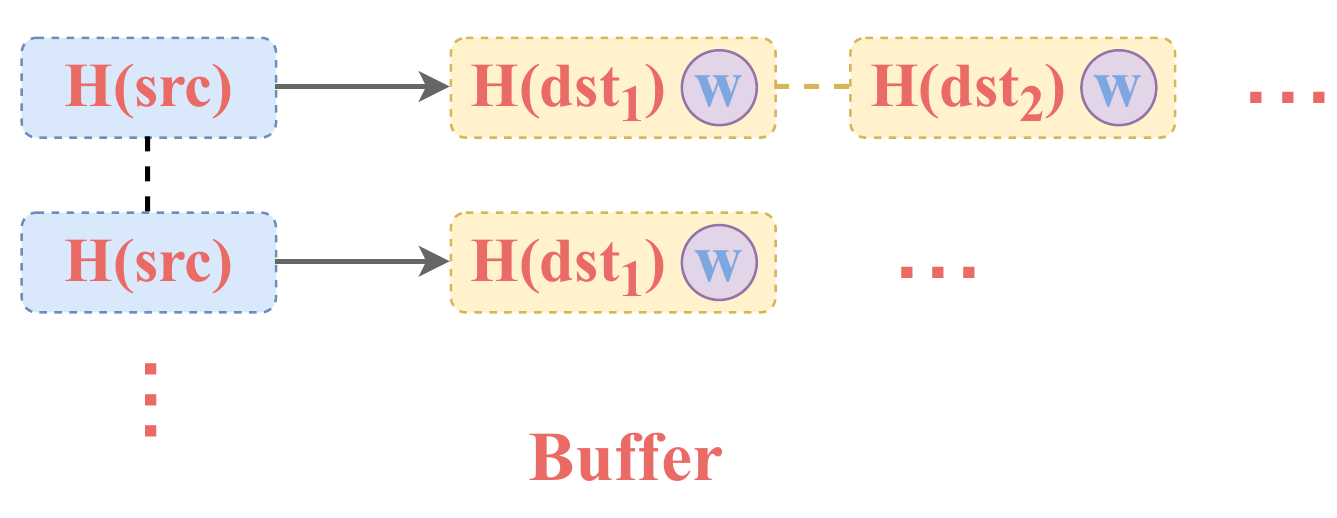} 
\caption{The structure of Buffer} 
\label{fig:buffer}
\end{figure}

\noindent
\textbf{Buffer.}  
Since the buffer \yl{is designed to store the left-over edges,} \yl{GSS simply uses} an adjacency list.
It is illustrated as in Figure \ref{fig:buffer}, in which the hash value $H(v)$ of a vertex $v$ serves as its identifier. In summary, \yl{GSS consists} of an optimized storage matrix and an adjacency list serves as the buffer. 


\noindent
\textbf{Twin Cells.} 
As mentioned \yl{before}, a matrix cell only stores edges with the same fingerprints. When a fingerprint mismatch occurs during insertion, the edge will be put in the \yl{buffer}.
In order to improve the insertion success rate of the storage matrix, \zyl{GSS introduces Twin Cells to double its capacity without increasing the storage of the matrix}.

\begin{figure}[htbp]
\centering
\setlength{\abovecaptionskip}{0.1cm} 
\setlength{\belowcaptionskip}{-0.2cm}   
\includegraphics[width=1.2in]{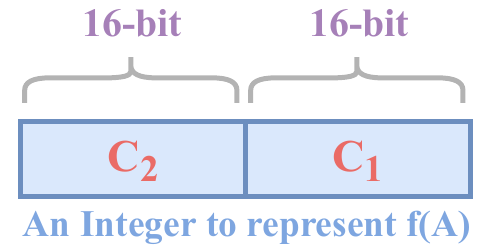} 
\caption{Illustrating twin cells in GSS} 
\label{fig:2room}
\end{figure}

Specifically, \yl{it divides} all the configurations (including a fingerprint pair and the corresponding \yl{weight}) in a cell into two twin segments, $C_1$ and $C_2$ (i.e., two children). For example, we \cy{can }use 16 bits to represent the fingerprints of $C_1$ and $C_2$, respectively, and concatenate them in one 32-bit word.
Therefore, each of the twin segments $C_1$ and $C_2$ is able to store a unique item, as illustrated in Figure \ref{fig:2room}.
When an edge locates a certain cell based on hashing and fingerprints, \yl{GSS checks} the status of the lower segment $C_1$ and the higher segment $C_2$ for the insertion in turns. Once $C_1$ is occupied and $C_2$ is available, \yl{it needs} to combine the lower segment $C_1$ with the newly inserted higher segment $C_2$ using bit operations, \yl{and then update the corresponding weights.} \zyl{In this way, \cs{GSS is }
able to reduce the memory accesses to finding the bucket without increasing storage, thus improving the efficiency of insertion and query of the entire sketch.}

\noindent
\textbf{Square Hashing And Sampling.} 
To further alleviate the matrix congestion caused by node degree skewness, \yl{GSS no longer insists} on mapping a vertex to a fixed row/column. 
\yl{Rather, it generates a series of mapped cells for each edge to make insertion attempts, and thus realizing decentralized distribution in the storage matrix by extending the mapping selections of vertices.
To achieve the two properties of \textbf{Independent} \cite{Cormode2005AnID, khan2016query} and \textbf{Reversible}, GSS accomplishes the generation of square hashing by means of linear congruence method \cite{l1999tables}.
}
\yl{GSS sets} $f(v)$ as the seed to generate a sequence of random values of length $r$, which can be expressed as 
\begin{equation}  \label{linear}
   \left\{
             \begin{array}{lr}
             l_1(v)=(T\times f(v)+I)\%M  \\
             l_i(v)=(T\times l_{i-1}(v)+I)\%M   \quad (2\leq i\leq r)\\
             \end{array}
    \right.
\end{equation}
where the multiplier $T$, the increment $I$, and the modulus $M$ are integer constants that specify the generator. 
Appropriate parameters can result in a sequence without duplicate values with a period much greater than $r$.
Based on the random value sequence, GSS arrives at the address candidate list as:
\begin{equation}  \label{linear2}
   \{s_i(v)|s_i(v)=(s(v)+l_i(v))\%d,1\leq i\leq r\}
\end{equation}

The length $r$ of the sequence is mainly set based on the skewness of datasets. It is time-consuming if all $r\times r$ cells are checked, while most edges do not need to traverse all of these in an ideal situation. Hence, GSS does not check all the mapped cells. Instead, it use the linear congruence method again to select $s$ buckets among them as sampled cells.
It calculates the sum of the fingerprints of the two endpoints of an edge to represent it, and use the summation as a seed to generate the following sequence:
\begin{equation}  
   \left\{
             \begin{array}{lr}
             Sp_1(e)=(T\times (f(A)+f(B))+I)\%M  \\
             Sp_i(e)=(T\times Sp_{i-1}(e)+I)\%M   \quad\quad  (2\leq i\leq s)\\
             \end{array}
    \right.
\end{equation}

Each value in sequence $Sp$ is then transformed into the address subscripts of the two endpoints using the same method as the fingerprint technique, thus

\begin{equation}  
\setlength{\abovedisplayskip}{3pt}
   \left\{
             \begin{array}{lr}
             A_i =  \lfloor \frac{Sp_i(e)}{r} \rfloor \% r    \\  
             B_i =  Sp_i(e)\%r   \quad\quad\quad\quad\quad\quad\quad\quad\quad (1\leq i\leq s) \\
             Sampled \, cell_i = (s_{A_i}(A),s_{B_i}(B))     \\  
             \end{array}
    \right.
\end{equation}
where $A_i$ denotes the selected subscript of the address candidate list of the starting vertex, while $B_i$ denotes the same of the ending vertex.

\section{The Construction Of LSketch} \label{sec:lsketch}
\yl{\cy{Although }GSS works well on homogeneous graph streams, \cy{it is not able to support heterogeneous graph stream based queries, which are more needed in real scenarios. }
Therefore, we propose to build on GSS to construct sketches that can adapt to heterogeneous graph streams.}
In this section, we will show how to build an \textbf{LSketch} (i.e., Label-enabled graph stream Sketch) step by step. 
We start by \yl{introducing how to store labeled information and how to apply sliding windows \cy{to strike an emphasis on preserving time sensitive information}.}
Then we will \cy{show pseudocode and running examples to better illustrate how LSketch works in practice, and complete the section by providing the improvement methods of LSketch to accommodate the needs of different types of datasets. }

\subsection{Labels Enabled \yl{Matrix}} \label{sec:block}   
As mentioned in Section \ref{sec:introduction}, the labels of nodes and edges play important roles in analyzing graph based applications. We now discuss how to accommodate labels on top of \yl{GSS.}
For a heterogeneous graph stream, let an edge be in the form of $e=(A,B;l_A,l_B,l_e;w)$, where $l_A$ and $l_B$ are the labels of nodes A and B, respectively, and $l_e$ represents the edge label.

\noindent
\textbf{Storage Blocks Division.}
The design of sketches focuses on query efficiency and low storage space costs.
Our key idea is to cluster nodes with the same vertex labels together, so that the vertex labels can be encoded without taking up additional space.
Therefore, we divide the storage matrix into \zyl{$n \times n$ blocks ($n=d/b$)}, \cy{where $d$ represents the width of the whole matrix as in GSS, }and each block is a submatrix of width $b$. 
For an incoming item $e$, we first use a hash function $H(\cdot)$ to map the two vertex labels $l_A$ and $l_B$ into a value range \zzzyl{$[0,n)$} to locate a storage block. 
Then, we use $H(\cdot)$ again to map $A$ and $B$ into values in $[0,b)$ to locate a cell $E$ inside the selected block for storing $e$.

Through the above two-level hashing technique, we can not only complete the encoding of vertex labels without increasing the storage cost, but also ensure that the sketch can support subsequent aggregation queries based on vertex labels very efficiently.
Figure \ref{fig:blocks} illustrates the idea of Storage Blocks.

\begin{figure}[htbp]
\centering
\setlength{\abovecaptionskip}{0.1cm} 
\setlength{\belowcaptionskip}{-0.2cm}   
\includegraphics[width=2.5in]{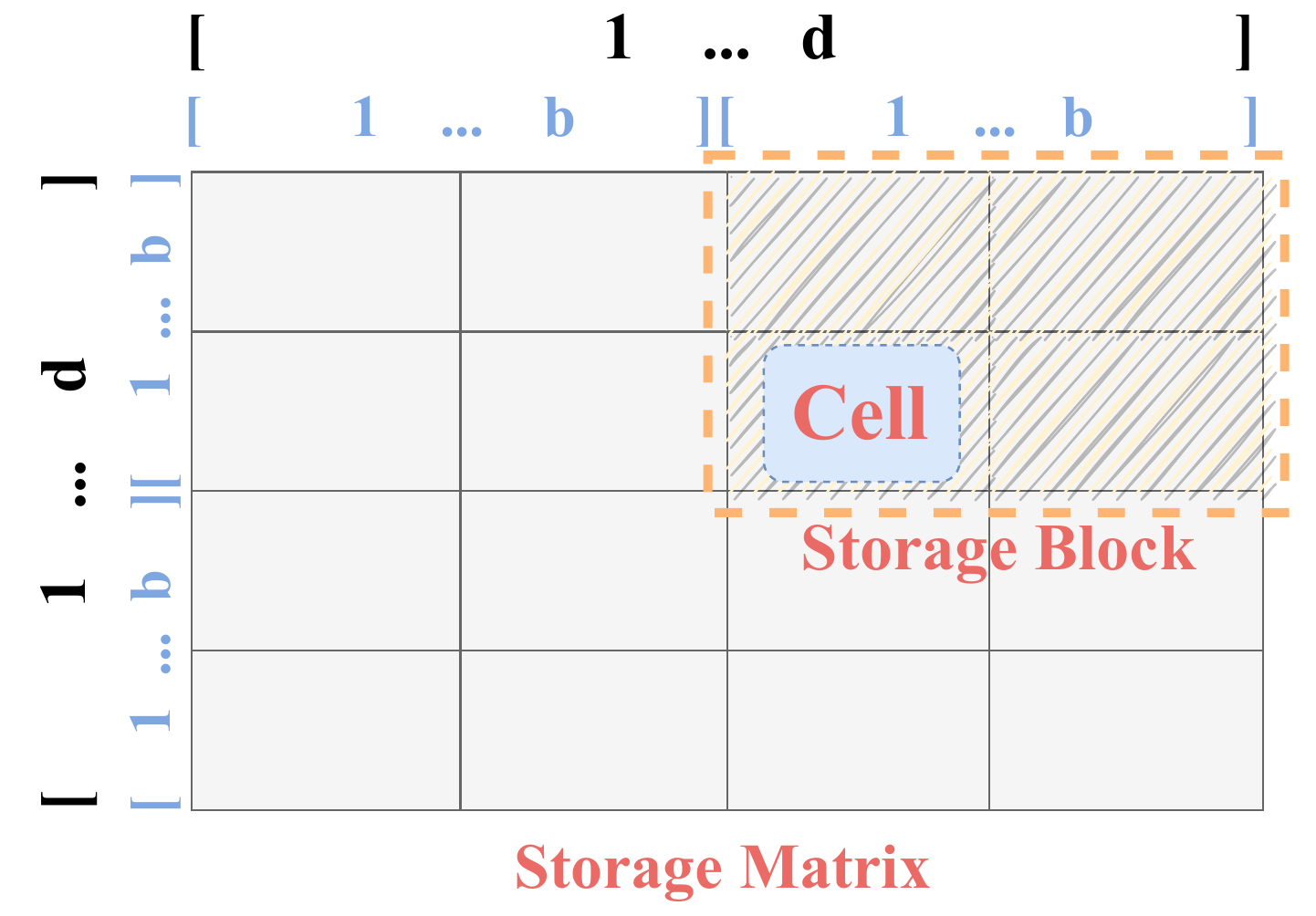} 
\caption{The illustration of Storage Blocks} 
\label{fig:blocks}
\end{figure}

\noindent
\textbf{Dual Counters.} 
As described in Section \ref{sec:graphstreams}, an edge may appear multiple times between the same pair of nodes, possibly with different weights or labels at different times.
The total weight of an edge in graph $G$ is the sum of all item weights sharing the same pair of endpoints. In order to effectively retain heterogeneous information, our sketch must be able to differentiate edge labels and store the respective weights of different edge labels.

We propose the Dual Counters mechanism to accurately record the edge weights of different types by maintaining two different counters in a matrix cell $E$. The content recorded by the first counter $C$ is the number of edges hitting the target cell. This counter is used for answering queries without label restrictions effectively. The second counter $P$ records the product of prime numbers to efficiently encode different types and different numbers of edges in one number. To achieve this, we give a list of prime numbers of length $c$ and map each distinct edge label to a unique prime number. For an incoming item $e$ with edge label $l_e$, we use $H(\cdot)$ to map $l_e$ into a value range $[0,c)$ to get a prime number representation for this particular edge label. Then the selected prime number is multiplied into counter $P$, so that the weights are superimposed.
Due to the uniqueness of factorization of the product of prime numbers, we can easily get the weights corresponding to each edge label through the counter $P$. Thus, the sketch can handle multi-edges in a very space-efficient manner.
\zyl{Note that when the sliding window is not introduced, we will process the counter $P$ into a list of great numbers by setting a threshold. In the application of practical scenarios, the storage of the counter $P$ is usually done by a great number under a reasonable window size.}
The Dual Counters mechanism 
applies to both unweighted and weighted graph streams (weights must be integers). For weighted graph streams, we only need to repeat the above operations according to the weights, which is illustrated later in Example 2.

\subsection{Timestamp Incorporated \yl{Matrix}}
In real applications, graph streams are updated at a high speed. With the continuous evolution of graph streams, the existence of old edges can have a detrimental effect on data analysis of the current moment. 
Sliding windows are a common technique~\cite{crouch2013dynamic, datar2002maintaining, kumar2015maintaining}.
Therefore, we devise a sliding window scheme to automatically handle edge deletions, ensuring the timeliness of graph streams to support the subsequent time-sensitive queries.

\noindent
\textbf{Sliding Windows.} 
Assuming that the size of a sliding window is $W$ time units, our sketch only maintains items that arrive after $t-W$, where $t$ is the current time, and items too old will be automatically removed.
Real-world queries are mostly related to time periods rather than specific time points.
Therefore, we subdivide the sliding window into $k$ subwindows according to the granularity demand of the analysis of the graph stream. Knowing $W$ and $k$, we can easily compute the size of each subwindow as $\frac{W}{k}$. 
\yl{\cy{In order to store required information in a storage-efficient manner, there is no need to store every item's timestamp. }
Instead, we propose to store a "lastT" timestamp, which represents the start time of the most recent subwindow. Although only one timestamp is recorded in the entire sketch, we have the ability to reason about the overall time region based on the size of the subwindows and \cy{the whole} window.}
Let the start time of the most recent subwindow be $t_n$. \cy{Whenever }
the current time $t \geq t_n+\frac{W}{k}$, we start a new subwindow with time $t$ and remove the oldest one. In this way, we can succinctly and effectively support sliding windows \cy{with a predefined granularity}.
\yl{Figure \ref{fig:windows2} illustrates the sliding process.}

\begin{figure}[htbp]
\centering
\setlength{\abovecaptionskip}{0.1cm} 
\setlength{\belowcaptionskip}{-0.2cm}   
\includegraphics[width=3in]{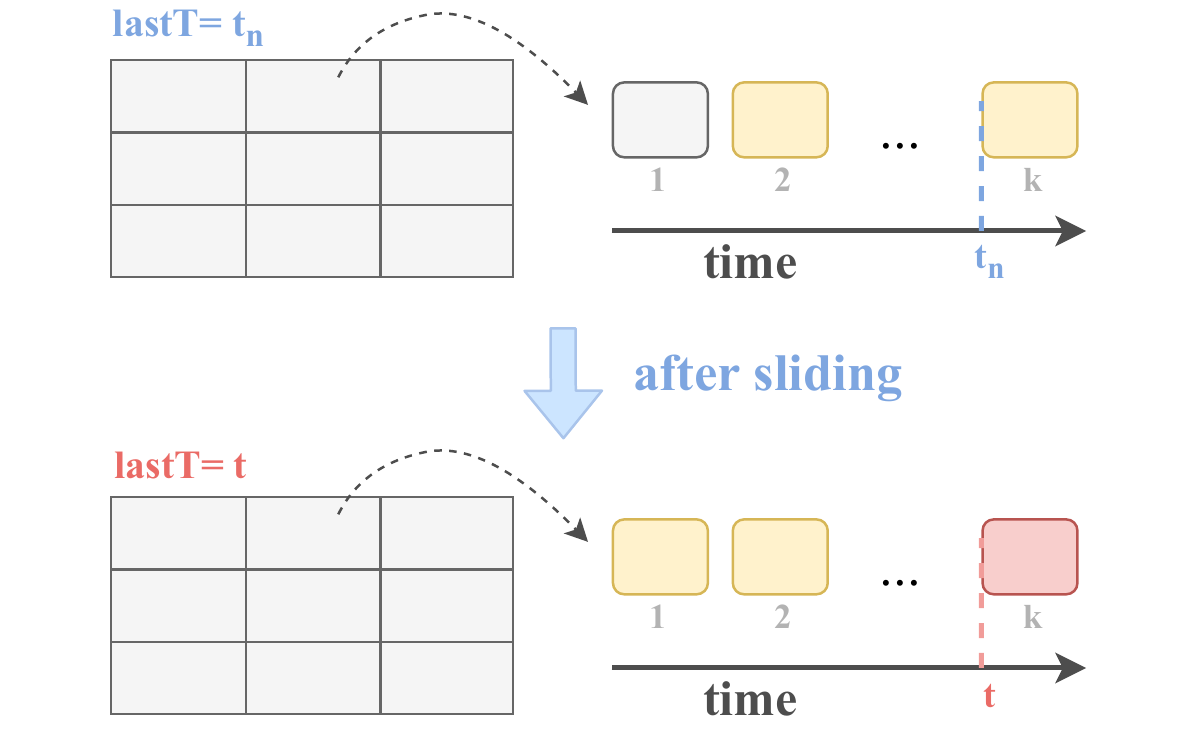} 
\caption{The sliding process of the storage matrix}
\label{fig:windows2}
\end{figure}


\begin{figure}[htbp]
\centering
\setlength{\abovecaptionskip}{0.1cm} 
\setlength{\belowcaptionskip}{-0.2cm}   
\includegraphics[width=3.2in]{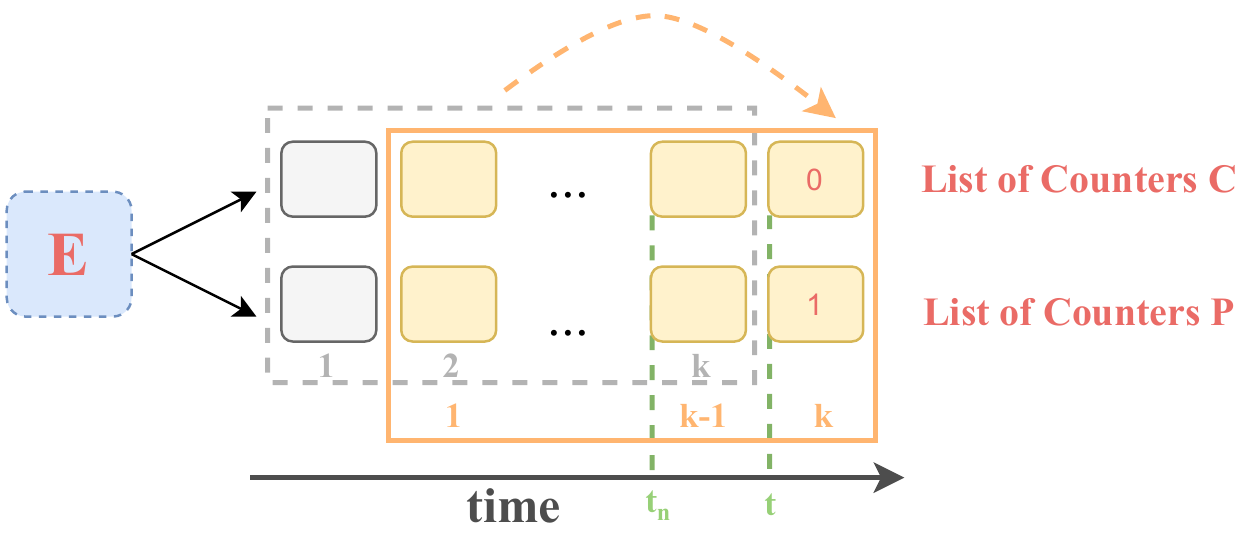} 
\caption{The data structure of a matrix cell after introducing sliding windows}
\label{fig:windows}
\end{figure}

From Figure \ref{fig:windows}, we can see that the Dual Counters mechanism can be easily integrated with sliding windows. For each subwindow, we assign two counters to record the total weight of the cell and the respective weights of different edge labels. Consequently, two counter lists of length $k$ are maintained in each matrix cell.

\subsection{\yl{The Additional Pool}} \label{sec:pool}
In addition to the storage matrix, we need a structure, named the \cy{a}dditional pool, to accommodate the conflicting items in the matrix. \cy{Similar as the adjacency list buffer used in GSS, }
the \cy{a}dditional \cy{p}ool should be a ``catch-all'' structure.
Specifically, we redesign the structure of the \cy{a}dditional pool \cy{as illustrated }
in Figure \ref{fig:pool}, in which the hash value $H(v)$ of a vertex $v$ serves as its identifier. 
We use an adjacency list to store the pointing relationship between vertices, while the weights between them are maintained using an array and a double-ended queue by storing the \cy{indexes (idx)}. This allows for fast sliding of subwindows and also ensures that the pool can slide synchronously with the storage matrix.


\begin{figure}[htbp]
\centering
\includegraphics[width=0.47\textwidth]{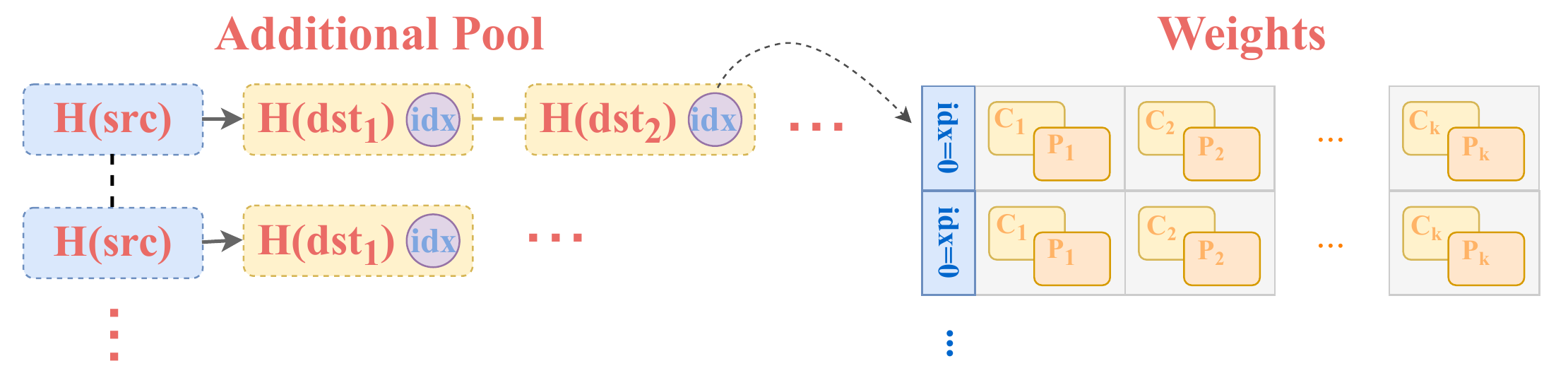} 
\caption{The structure of Additional Pool} 
\label{fig:pool}
\end{figure}

\subsection{LSkecth: the Full Framework.}
With various key components of LSketch in place, we are ready to see how they holistically work together.
\yl{Inspired by GSS, we adapt the Twin Cells and Square Hashing techniques into LSketch to further increase the capacity of the storage matrix to cope with uneven space consumption due to node degree skewness.
}
If the success rate of item insertions in the matrix increases, the overall update and query efficiency of LSketch will be significantly improved, as going through the \cy{a}dditional \cy{p}ool would be less efficient due to the list search cost.
\yl{Introducing the twin cells strategy is relatively simple\cy{. We} just \cy{need to} make sure that the fingerprint pair and the corresponding counter lists in a cell are divided into two twin segments, as illustrated in Figure \ref{fig:personality}. Next, we will talk about how to apply the square hashing strategy.}



\begin{figure}[htbp]
\centering
\setlength{\abovecaptionskip}{0.1cm} 
\setlength{\belowcaptionskip}{-0.2cm}   
\includegraphics[width=3.2in]{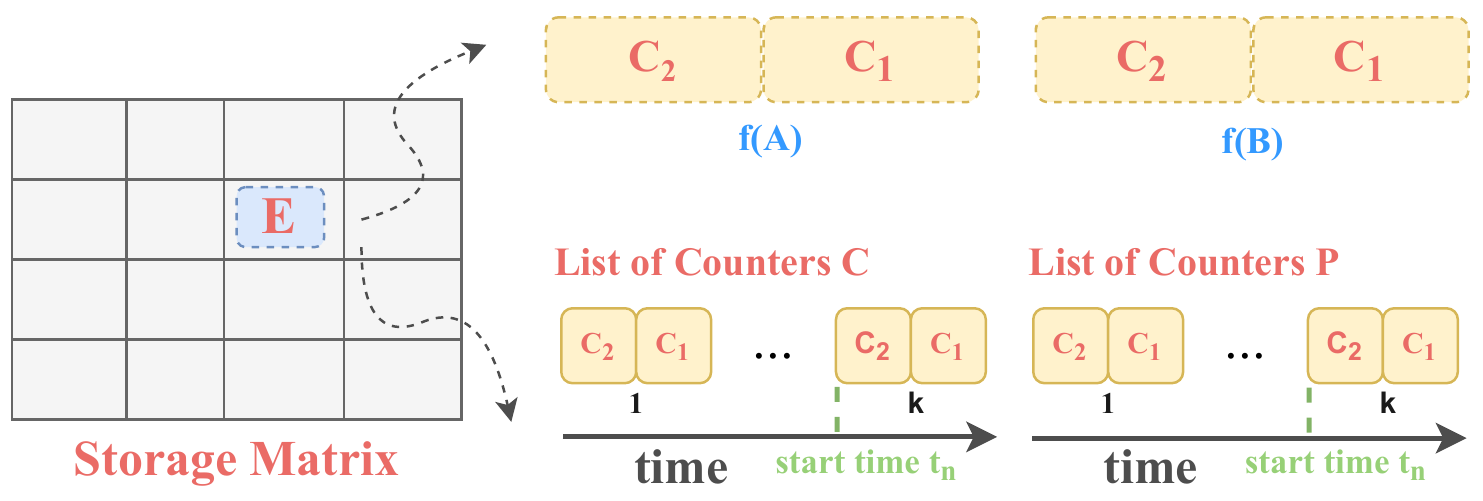} 
\caption{Illustrating twin cells in the matrix}
\label{fig:personality}
\end{figure}


\yl{When an item \cy{arrives}
, we first} use the same method as described in Section \ref{sec:block} to locate a certain storage block through the vertex labels of the two endpoints. All subsequent analyses are limited within this storage block.

In a nutshell, for a vertex $v$, we get its hash representation $H(v) = (s(v), f(v))$ using a hash function $H(\cdot)$. Then, \yl{referring to the idea of GSS, we use linear congruence method \cite{l1999tables} }to generate an address candidate list $\{s_1(v), s_2(v), ..., s_r(v)\}$ for $v$, where $0 \leq s_i(v)< b$.
Thus, edge $(A, B)$ has $r \times r$ candidate cells to be inserted into in the selected storage block according to the candidate lists of its two endpoints, which can be expressed as $\{(s_i(A), s_j(B))| 1\leq i\leq r, 1\leq j\leq r\}$, where $s_i(A)$ indicates a row coordinate and $s_i(B)$ indicates a column coordinate.
Once the attempt is successful, the edge will be stored in the first valid cell;
otherwise it needs to be placed in the \cy{a}dditional \cy{p}ool.

Obviously, to perform the item identity matching and subsequent queries successfully, the matrix cell needs to record the selection of the address candidate list, and we define it as an index pair $(i_r, i_c)$.
This means that for the from/to vertex $A/B$, the $i_r$-th/$i_c$-th address is selected, thus the item is stored in the $(s_{i_r}(A), s_{i_c}(B))$ position of the storage block.

\cy{We need to }
note that with enabling of the twin cells strategy, the inspection of each cell involves the matching attempts of two sets of index pairs and fingerprint pairs.
We use Figure \ref{fig:big} to illustrate the model with candidate Lists. For the sake of clarity, we omit the twin cells strategy and the sliding windows model here. 
The selected storage block is shown in the right plot of Figure \ref{fig:big}, in which $r$ is set to 2, resulting in a total of 4 candidate cells.

\begin{figure}[htbp]
\centering
\setlength{\abovecaptionskip}{0.1cm} 
\setlength{\belowcaptionskip}{-0.2cm}   
\includegraphics[width=3.4in]{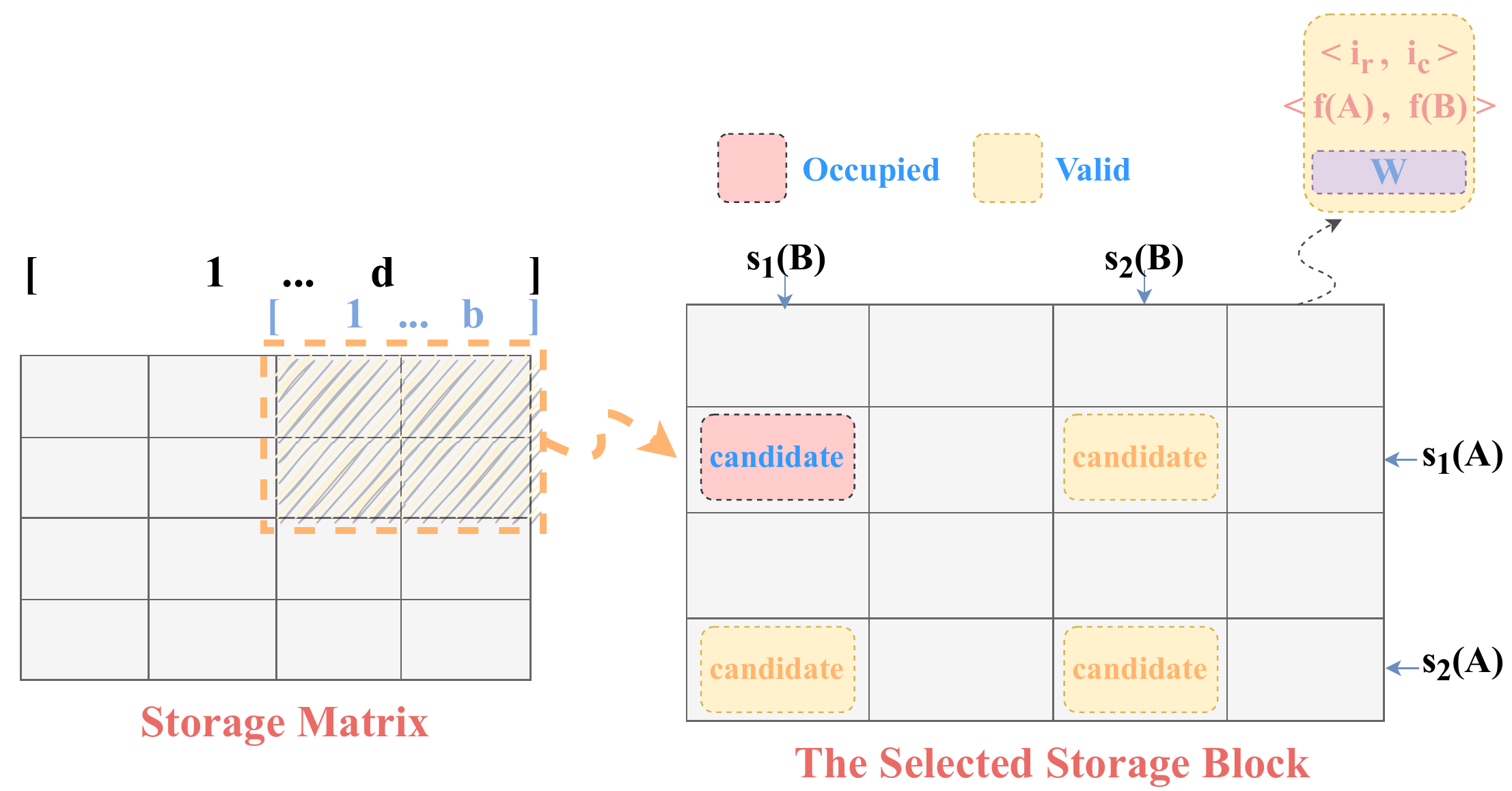} 
\caption{The illustration of the Candidate Lists} 
\label{fig:big}
\end{figure}



\begin{algorithm}[htbp]
	\caption{Precompute}
	\label{al:precompute}
	\LinesNumbered
	\small{ \KwIn{a vertex $A$ and its corresponding label $l_A$}  }
	\KwOut{the hash value $H(A)$, the initial address $s(A)$, the fingerprint $f(A)$, and the candidate list $l_i(A) (1\leq i \leq r)$ }
	Initialize the width of the storage matrix $d$, the width of the storage block $b$, the fingerprint length $F$, the candidate list length $r$, and hash function $H(\cdot)$ \\
	
    $\zzzyl{m_A} = H(l_A) \% \frac{d}{b}$\\
    $s(A)= \lfloor \frac{H(A)}{F} \rfloor$\\
    $f(A)= H(A)\% F$ \\
    $l_1(A)=(T\times f(A)+I)\%M$ \\
    \For{$i \leftarrow 2 \ldots r$}
    {
        $l_i(A)=(T\times l_{i-1}(A)+I)\%M $  \\
    }
	return $H(A)$, $s(A)$, $f(A)$, and $l_i(A) (1\leq i \leq r)$
\end{algorithm}

After getting familiar with all constituent parts, we are ready to show the whole algorithm.
Algorithm \ref{al:precompute} shows the process of calculating the address candidate list using vertex identifiers and labels. 
We first calculate the \zyl{index} of the storage block in a certain dimension according to the vertex label (line 2), and then calculate the initial address and the fingerprint of the vertex (lines 3-4). The address candidate list of the given vertex is generated using Equations \ref{linear} and \ref{linear2} (lines 5-7).

Algorithm \ref{al:lsketch} illustrates the complete process of how to insert items into LSketch. We mainly show the process of items inserting into the storage matrix, while the \cy{a}dditional \cy{p}ool \cy{related} insertion is similar but simpler. 
We first initialize the LSketch (lines 1-4). Then we handle the item insertion upon each item's arriving.
At the beginning, according to the timestamp carried by the incoming item, we first check if we need to start a new subwindow and perform the window sliding (lines 6-9).
Then, we use other information to find possible insertion positions of the item.
Calling Algorithm \ref{al:precompute} on vertices $A$ and $B$, we can get their fingerprints and candidate lists (line 10).

After that, we generate the sampling sequences according to the strategy described earlier in Section \yl{\ref{sec:gss}} (lines 11-15). 
\zyl{Finally, we check the $s$ cells in turn. If an empty cell is encountered, insert it directly; otherwise, we will update the values if the index pair and the fingerprint pair of the cell completely match the inserted ones (lines 16-23).
Once the insertion or update is complete, this round is finished.}
Otherwise, we place the item into the \cy{a}dditional \cy{p}ool if all the above attempts fail (lines 24-25).

\begin{algorithm}[htbp]
	\caption{\zzzyl{LSketch}: Labeled Graph Stream Sketch}
	\label{al:lsketch}
	\LinesNumbered
	\small{ \KwIn{a graph stream $G$}  }
	\KwOut{\zyl{LSketch $S_k$, with the storage matrix $S$} and an addtional pool $AP$ }
	Initialize the width of the storage matrix $d$, the width of the storage block $b$, the fingerprint length $F$, the candidate list length $r$ and its sampling length $s$, along with window size $W$ and subwindow size $W_s$. Also set the start time of the latest subwindow $t_n$ to current time $t$.\\
	$S\leftarrow$ an empty sketch \\
	$\zyl{P_r} \leftarrow$  a predefined list of prime numbers of length $c$\\
	$k \leftarrow W/W_s$, the number of subwindows \\
	
	\For{each arriving item $e=(A,B;l_A,l_B,l_e;w;t)$}
	{	
	    \uIf{$t_n+W_s \leq t$}
		{
			\For{$i \leftarrow 2 \ldots k$}
            {
                \zyl{$S[i-1] \leftarrow S[i]$}  \\
            }
            $t_n = t$
		}
        
        $Precompute(A,l_A)$, $Precompute(B,l_B)$  \\
       
        $Sp_0(e) = f(A)+f(B)$ \\
        \For{$i \leftarrow 1 \ldots s$}
        {
            $Sp_i(e) = (T\times Sp_{i-1}(e) + I)\%M$ \\
            $A_i =  \lfloor \frac{Sp_i(e)}{r} \rfloor \% r$, $B_i =  Sp_i(e)\%r $   \\
            $p_1 = (s(A)+l_{A_i}(A)) \% b$, $p_2 = (s(B)+l_{B_i}(B)) \% b$  \\
    		$E = S[\zyl{m_A}\times b+p_1][\zyl{m_B}\times b+p_2]$  \\
            Traverse the twin cells in turn \\
                \uIf{the stored index pair and fingerprint pair match the ones to be inserted \zyl{or the cell is empty}}
                {
                    $P_e = \zyl{P_r}[H(l_e)\%c]$  \\
                    \For{$i \leftarrow 1 \ldots w$}
                    {
                        $E[k].C = E[k].C +1$  \\
                        $E[k].P = E[k].P \times P_e$ \\
                    }
                    $S[\zyl{m_A}\times b+p_1][\zyl{m_B}\times b+p_2] = E$ \\
                }
        }
        \uIf{all insertion attempts in the storage matrix fail}
        {Insert $e$ into the additional pool $AP$}
	}
	return $S$
\end{algorithm}

Next, we show some examples of LSketch to better illustrate how the algorithms work.

\begin{example}
Suppose a graph stream is composed of the following items: $(a,b;l_2,l_1,l_{e1};3;t1)$, $(a,c;l_2,l_1,l_{e1};1;t2)$, $(b,d;l_1,l_2,l_{e2};2;t3)$, $(b,e;l_1,l_1,l_{e1};1;t4)$, \\ $(c,b;l_1,l_1,l_{e2};2;t5)$, and $(e,c;l_1,l_1,l_{e1};1;t6)$. We set $d=6, b=3, F=8, r=2$, and $s=2$ at this point. 

First, we compute the corresponding hash values and calculate the final candidate addresses of each vertex, which is shown in the left part of Figure \ref{fig:example1} (we omit the process of calculating address candidate list here for clarity).
Based on the hash values of the corresponding vertex labels, we can know which block each item should be stored in. For example, items $(b,e)$, $(c,b)$ and $(e,c)$ should be placed in the upper left storage block.
Next, we update the sketch with the insertion of each item.
For item $(a,b;l_2,l_1,l_{e1};3;t1)$, it is placed in the $0$-th/$1$-st cell of the lower left storage block with index pair $(1,1)$ and fingerprint pair (1,4). For item $(a,c;l_2,l_1,l_{e1};1;t2)$, we first check the $0$-th/$1$-st cell of the lower left storage block according to the two endpoints' candidate address list. Note that we omit the twin cells strategy here to better illustrate how to deal with cell conflicts and to search through the address list. Since the cell is occupied by other items (as the fingerprints do not match), we check the $0$-th/$2$-nd cell of the lower left storage block and insert it with the index pair $(1,2)$ successfully (note that in practice, it could certainly succeed during the first insertion attempt when there is no conflict).
The insertion of other items is similar, and the final result is shown in the right plot of Figure \ref{fig:example1}.
Since this example focuses on showing how to locate items, we omit the operations of how weights change in it.
\begin{figure}[htbp]
\centering
\setlength{\abovecaptionskip}{0.1cm} 
\setlength{\belowcaptionskip}{-0.2cm}   
\includegraphics[width=3.5in]{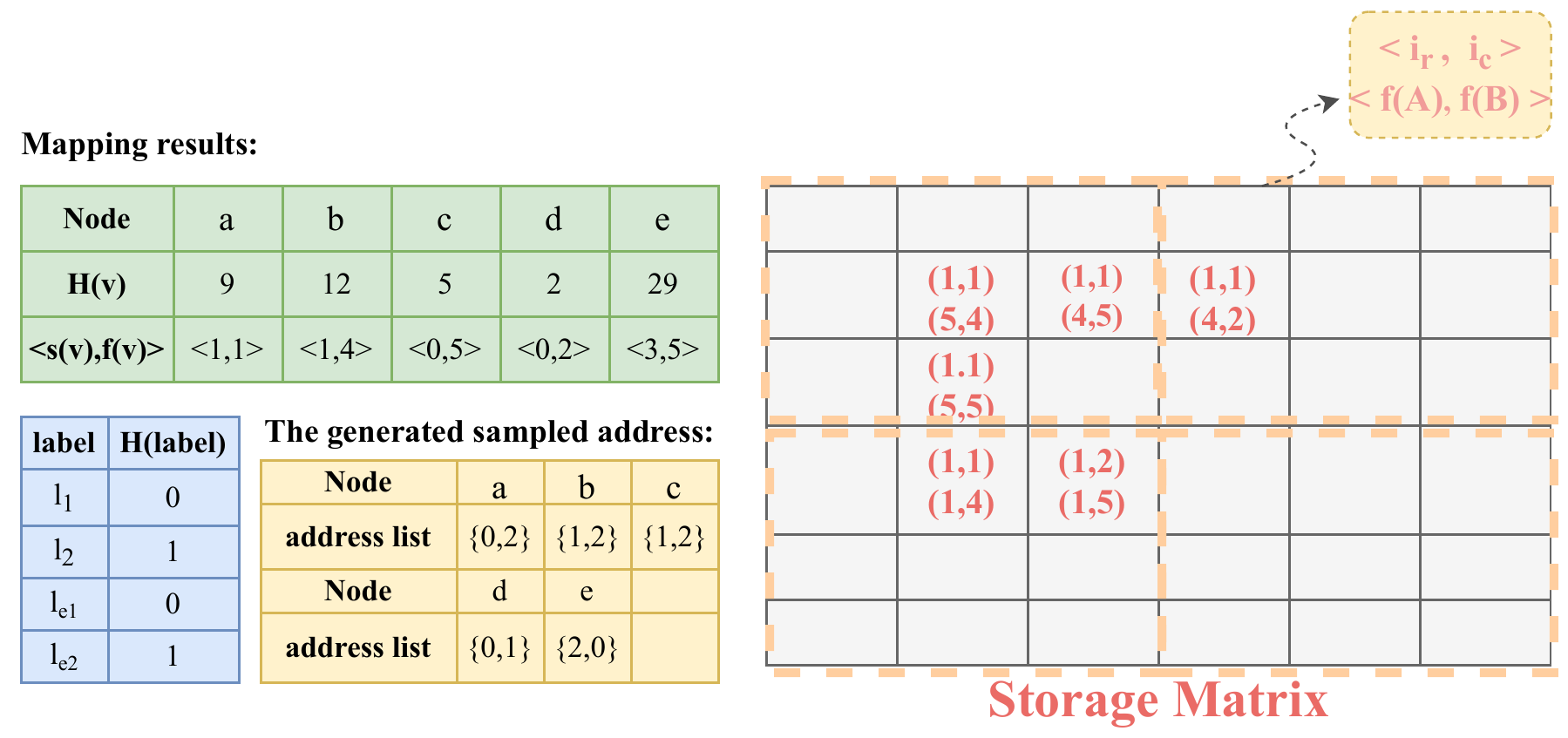} 
\caption{The first example of LSketch} 
\label{fig:example1}
\end{figure}
\end{example}

\begin{example}
\begin{figure}[htbp]
\centering
\setlength{\abovecaptionskip}{0.1cm} 
\setlength{\belowcaptionskip}{-0.2cm}   
\includegraphics[width=3.5in]{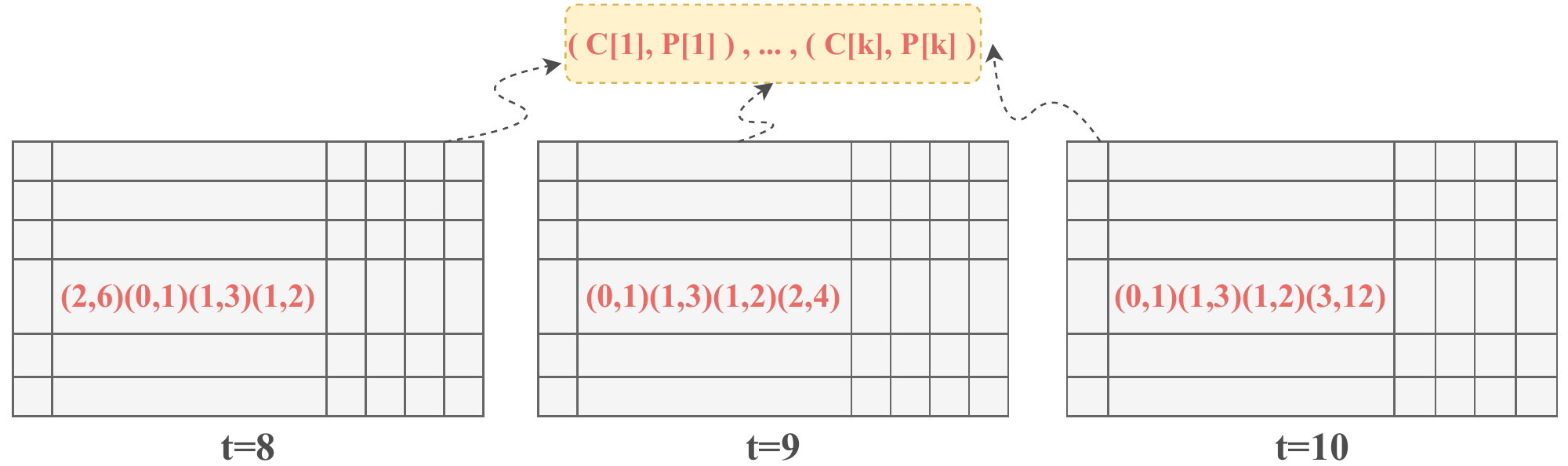} 
\caption{The second example of LSketch} 
\label{fig:example2}
\end{figure}

Next, we focus on showing how to handle the weights variation with the sliding window model in LSketch.

We set $W=8$ and $W_s=2$, and thus each cell maintains two counter lists of length 4. Also, we give the predefined prime number list as \zzzyl{$P_r=[2,3]$} with $c=2$.
Suppose we have located the item with the vertex pair $(a,b)$ to the $0$-th/$1$-st position of the lower left storage block (the same setting as in the previous example), and the current state of time $8$ is shown in Figure \ref{fig:example2}.

Now we continue to process items of vertex pair $(a,b)$. At time $9$, suppose item $(a,b;l_2,l_1,l_{e1};2;9)$ arrives. Since the first subwindow expires, we slide it and start a new one.
Given the hash results, where edge label $l_{e1}$ corresponds to the prime number \zzzyl{$P_r[0]=2$}, we update the new subwindow from $(0,1)$ to $(0+1+1,1*2*2)$. At time 10, suppose item $(a,b;l_2,l_1,l_{e2};1;10)$ arrives, so the latest subwindow is updated to $(2+1,4*3)$, given the prime number representation \zzzyl{$P_r[1]=3$}.
\end{example}

\subsection{Further Improvements}
So far, LSketch is able to cope well with the heterogeneous graph streams that are updated at high speed in real-world scenarios. However, it is observed that there are some \cy{circumstances }
where the labels of the vertices are not evenly distributed. That is, most vertices in the graph stream are of the same label. If we continue to apply the previous idea of Storage Blocks Division, it will lead to a situation where a certain block is severely \cy{congested }
while other blocks are stored sparsely.
Therefore, we propose another strategy, called "Skewed Blocking", for such extreme cases to improve the defects of the previous uniform blocking.

\noindent
\textbf{Skewed Blocking.} 
In order to efficiently encode vertex labels into sketches and support subsequent queries, we still use the idea of blocking. However, the size of each matrix blocks should be set upon demand. In short, instead of dividing the matrix evenly, we set the proportional distribution of labels according to a predefined scenario.

Given a dataset containing two vertex labels, Figure \ref{fig:even2skewed} gives a sample illustration of candidate lists when inserting an edge under both uniform and skewed blocking strategies, where the ratio of the two vertex labels is set to 3:7 under the skewed blocking strategy. It can be seen that when faced with extreme \cy{unbalanced} datasets, the strategy of skewed blocking can well improve the success rate of edge insertion into the matrix. \cy{In order to get the distribution characteristics of the graph stream, we can collect the data for a short period of time and choose from uniform blocking or skewed blocking according to the distribution characteristics of the data.}

\begin{figure}[htbp]
\centering
\setlength{\abovecaptionskip}{0.1cm} 
\setlength{\belowcaptionskip}{-0.5cm}   
\subfigure[The uniform blocking]{
\begin{minipage}[htbp]{0.47\textwidth}
\includegraphics[width=\textwidth]{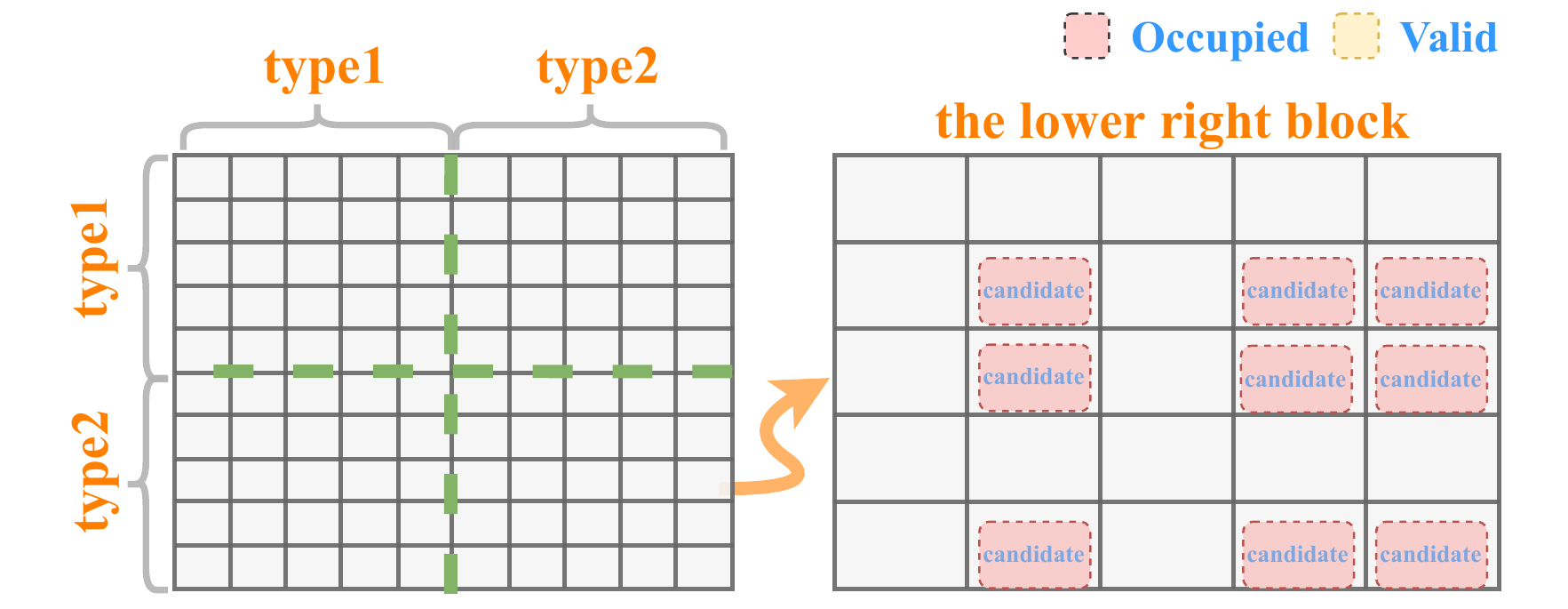}
\end{minipage}
}
\\
\subfigure[The skewed blocking]{
\begin{minipage}[htbp]{0.47\textwidth}
\includegraphics[width=\textwidth]{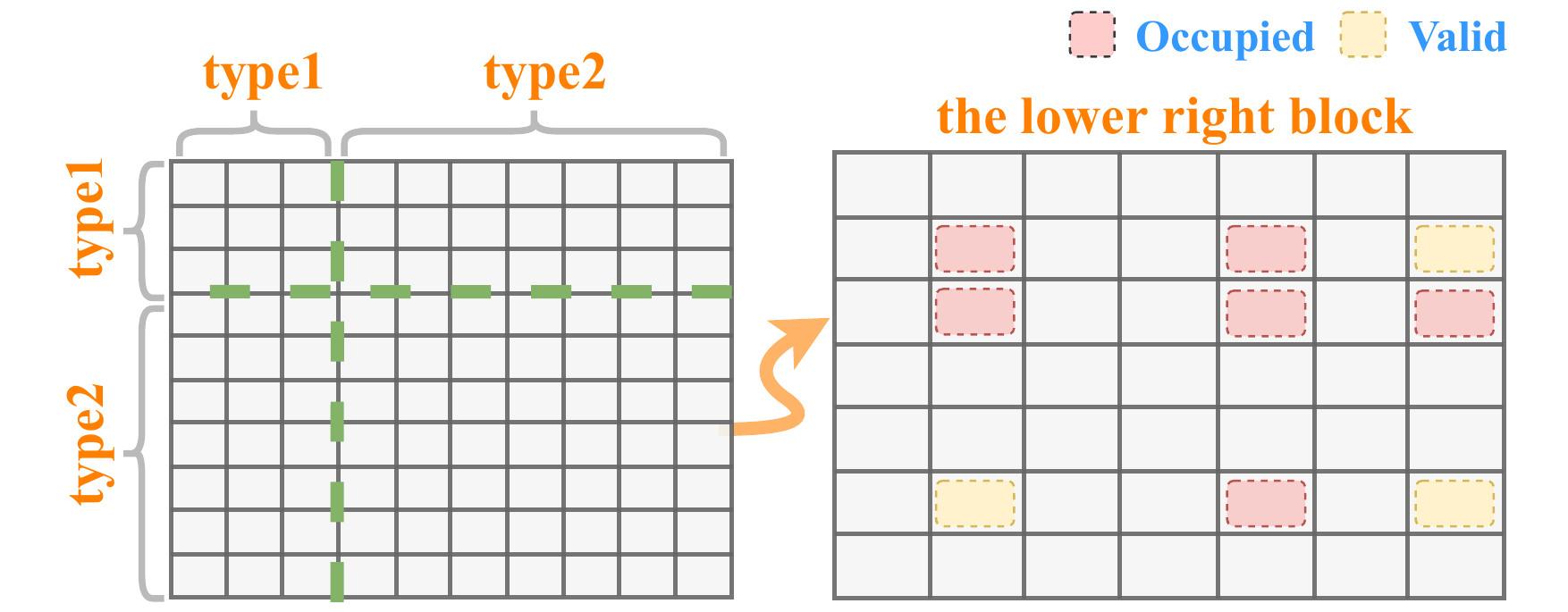}
\end{minipage}
}
\caption{Illustrating the comparison of uniform and skewed blocking}
\label{fig:even2skewed}
\end{figure}

\subsection{Storage and time complexity analysis} 
In practice, we implement the storage matrix by pointers, so only existing edges occupy the storage space. Along with the additional pool, LSketch stores multiple edges (where edges have the same endpoints but may have different edge types, and may arrive at different timestamps) in one storage cell, so the storage space needed is at most $k\times |E_s|$, where $|E_s|$ is the number of distinct edges (edges with the same endpoints are counted only once) within the sliding window, and $k$ is the number of subwindows. This is expected to be much less than the actual edge number $|E|$. Thus LSketch is very efficient in the aspect of storage, especially for graph stream with a high incoming rate and many duplicate edges. 
As for time consumption, 
LSketch takes $O(1)$ time to locate the storage block, and we check at most $s$ sample cells when inserting edges into the storage block, which consumes $O(s)$ time. Although the additional pool takes linear time to insert, the chance of actually \cy{needed }
to insert into it is almost negligible, with our various techniques to avoid doing so.
Since $s$ is usually a small constant, the insertion time complexity is $O(1)$.


\section{LSketch Powered Graph Queries}
\begin{algorithm}[htbp]
	\caption{GetWeightsInM}
	\label{al:getw}
	\LinesNumbered
	\small{ \KwIn{the selected matrix cell $E$ and its child option $shifter$, an edge label $l_e$ (optional)}  }
	\KwOut{$w$: the total weight of edges in one child segment of $E$, $w_l$: the weight of edges with label $l_e$ in one child segment of $E$}
    $P_e = \zyl{P_r}[H(l_e)\%c] $ //get the prime number representation of $l_e$ if provided\\ 
    $w=0, w_l=0$  \\
    \For{$i \leftarrow 1 \ldots k$}
    {
        $w=w+E[shifter][i].C$ \\
        $temp_p=E[shifter][i].P$  \\
        \While{$temp_p\%P_e =0$} 
	    {
	        $w_l=w_l+1$ \\
	        $temp_p=temp_p/P_e$ \\
	    }
    }
    return $w, w_l$
\end{algorithm}
Since LSketch keeps the connections between nodes, it is able to take care of all graph structure based queries. 
In this section, we will show some of them to illustrate how LSketch \yl{using the uniform blocking strategy} supports various queries, including structure based queries and time-sensitive queries.
The query methods on LSketch using the skewed blocking strategy can be easily extended, so we \cy{omit them for space saving.}
First, we introduce a basic algorithm \textsc{GetWeightsInM}, which is the basis for subsequent queries.

The algorithm \textsc{GetWeightsInM} is used to compute the total weight and the weight with a specific edge label, given a particular child segment in a designated storage cell.
In practice, we use bit operations to obtain the weights of a certain child segment in a cell. But for convenience, we treat it as a list in the pseudocode.
Therefore, by traversing $k$ subwindows and processing the counters in turn (lines 3-8), we can easily obtain the required weights.


\zyl{To facilitate the accuracy analysis of the following queries in this section, we first discuss the probability that an edge $e$ suffers from
edge collision.}

\begin{theorem} \label{edgecollision}
The probability that an edge encounters an edge collision $\tilde{P}$ is $1-e^{-\frac{(q_1 + L q_2 + L^{2} q_3)(|E|-d_v) + (DLh_1+DL^{2}h_2)d_v}{D^{2}L^{2}}}$, where $|E|$ is the number of edges and $d_v$ is the out-degree of vertex $v$ \cs{(the meanings of other parameters will be demonstrated in the proof, and followed by an explanatory example.)}.
\end{theorem}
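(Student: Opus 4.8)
The plan is to compute the complementary probability that $e$ encounters \emph{no} edge collision and then set $\tilde{P} = 1 - P(\text{no collision})$. An edge collision occurs when some other edge $e'$ is mapped into one of the candidate cells of $e$ and cannot be distinguished from it, i.e.\ it lands in the same storage block, in the same within-block cell (up to the square-hashing candidate-list choice), and carries a matching fingerprint pair and index pair. First I would express the collision event for a single interfering edge $e'$ as a conjunction of these nested requirements, so that the per-edge collision probability factors through the block-level label hash, the within-block address hash over the length-$r$ candidate lists (captured by a constant $L$ tied to $r$), and the fingerprint range (captured by $D$). This makes the total per-vertex configuration space $D^{2}L^{2}$ for a genuinely free edge.

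The key structural observation is to split the remaining edges according to whether they share the source vertex $v$ with $e$. By the definition of out-degree there are $d_v$ edges emanating from $v$, whose shared endpoint has an \emph{identical} hash value to that of $e$; the other $|E| - d_v$ edges are treated as independent of $e$ in both endpoints. First I would handle the $|E| - d_v$ independent edges: here both endpoints of $e'$ must align with $e$, so the collision probability is of order $1/(D^{2}L^{2})$. Enumerating the distinct ways the two candidate lists can overlap yields three sub-cases, weighted by $1$, $L$, and $L^{2}$ respectively; collecting their constants as $q_1, q_2, q_3$ gives a per-edge probability of $\frac{q_1 + Lq_2 + L^{2}q_3}{D^{2}L^{2}}$. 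Next I would handle the $d_v$ vertex-sharing edges: since the hash of $v$ is already fixed, only the second endpoint is free, which removes one factor of $D$ from the denominator. The analogous overlap enumeration produces two sub-cases with constants $h_1, h_2$, giving $\frac{DLh_1 + DL^{2}h_2}{D^{2}L^{2}} = \frac{h_1 + Lh_2}{DL}$, which is appropriately larger than the independent case.

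Finally I would assemble the global no-collision probability. Treating the individual per-edge collision events as approximately independent, $P(\text{no collision})$ is the product of $(1-p)$ over all interfering edges, i.e.\ $(1 - p_B)^{|E|-d_v}(1 - p_A)^{d_v}$, where $p_B$ and $p_A$ are the two per-edge probabilities just derived. Since each $p$ is small (the hash range $D$ is large), I would invoke the standard approximation $(1-p)^{n} \approx e^{-np}$ to collapse the product into a single exponential whose exponent is exactly $-\frac{(q_1 + Lq_2 + L^{2}q_3)(|E|-d_v) + (DLh_1 + DL^{2}h_2)d_v}{D^{2}L^{2}}$, and conclude $\tilde{P} = 1 - e^{-(\cdots)}$ as claimed.

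The main obstacle will be the careful case enumeration that pins down the constants $q_1, q_2, q_3$ and $h_1, h_2$: one must track how the length-$r$ candidate lists (and the subsequent $s$-cell sampling) of two edges partially or fully overlap, and weight each configuration by the correct power of $L$. A secondary difficulty is justifying the independence approximation, since the events are not strictly independent---in particular, the free endpoints of distinct vertex-sharing edges may themselves coincide---so I would argue that these correlations are second order and negligible in the small-$p$, large-$D$ regime where the exponential approximation is taken. I would close with a concrete numerical instantiation of $q_i$, $h_i$, $L$, and $D$, as promised in the statement.
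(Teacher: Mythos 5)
Your overall skeleton matches the paper's: you pass to the complementary no-collision probability, split the interfering edges into the $|E|-d_v$ that share no endpoint with $e$ and the $d_v$ that share the source $v$, derive per-edge collision probabilities of the form $\frac{q_1+Lq_2+L^2q_3}{D^2L^2}$ and $\frac{h_1+Lh_2}{DL}$, and collapse the product $(1-p)^n$ via $(1-p)^n\approx e^{-np}$. That is exactly the paper's route to the stated formula.

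The gap is in what $L$ and the case constants mean, and hence in how you would actually pin them down. In the paper, $L$ is the hash range of the \emph{vertex labels} (the block-selection hash), not a quantity ``tied to $r$'' or to the square-hashing candidate lists. The paper explicitly removes the candidate lists and the $s$-cell sampling from the analysis: collisions can only occur at the initial hashing stage, because the stored index pairs and fingerprint pairs guarantee that edges with different hash values never become indistinguishable, no matter which candidate cell they occupy. Consequently the denominator $D^2L^2$ is the joint range of the two vertex-identifier hashes and the two vertex-label hashes, and the three cases $q_1,q_2,q_3$ are not ``ways the candidate lists can overlap'' but the probabilities (under the data's label distribution, with $q_1+q_2+q_3=1$) that the interfering edge's two vertex labels are both genuinely different from, exactly one different from, or both identical to those of $e$ --- contributing label-collision factors $1/L^2$, $1/L$, and $1$ respectively. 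Likewise $h_1,h_2$ are the probabilities that the non-shared endpoint's label differs from or coincides with that of $e$. Your proposed enumeration of candidate-list overlaps would not produce these constants; in the uniform-label special case the paper computes them explicitly as $q_1=(l-1)^2/l^2$, $q_2=2(l-1)/l^2$, $q_3=1/l^2$, which depend only on the number of labels $l$ and have nothing to do with $r$ or $s$. So while the formula you wrote down is the right one, the argument you sketch for the middle step --- the only genuinely nontrivial step --- would not go through as described.
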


\begin{proof}
According to the construction method of LSketch, we know that edge collisions may only occur in the initial hashing stage, since all subsequent index pairs and fingerprint pairs ensure that edges with different hash values cannot occupy the same storage space.  
As stated before, the hash range is $[0,D)$ for vertices and \zyl{$[0,L)$} for vertex labels.

Locating an edge includes the hashing of vertex identifiers and vertex labels.
For an edge that has no common endpoints with a given edge $e$, it will only collide with $e$ when the starting vertex, ending vertex and the two vertex labels all collide with the corresponding ones of $e$.
There are three cases for the collision of vertex labels.
a) The corresponding starting vertex labels and the ending vertex labels are both different while their hash values are the same.
b) The vertex labels corresponding to the starting vertices or the ending vertices are different while their hash values are the same.
c) The corresponding vertex labels of the starting and the ending vertices are the same.
Thus the probability of collision can be expressed as:
\zyl{
\begin{equation} 
   p_1=\frac{1}{D^{2}}\times (\frac{1}{L^{2}}\times q_1 + \frac{1}{L}\times q_2 + q_3)
\end{equation}
, where $q_1$, $q_2$, and $q_3$ are the probability of the above three cases respectively and $q_1 + q_2 + q_3=1$.
And the probability that none of the other $|E|-d_v$ edges collides with $e$ is:
\begin{equation} 
   P_1=(1-p_1)^{|E|-d_v}
\end{equation}
}

As for the $d_v$ edges sharing one common endpoint with $e$, there are two cases for the vertex label to collide with another vertex.
a) The vertex labels of the non-shared vertices are different while their hash values are the same.
b) The vertex labels of the non-shared vertices are indeed the same.
Therefore, the probability that those edges collide with $e$ is:
\zyl{
\begin{equation} 
   p_2=\frac{1}{D}\times (\frac{1}{L} \times h_1 + h_2)
\end{equation}
, where $h_1$ and $h_2$ are the probability of the above two cases respectively and $h_1+h_2=1$.

And the probability that none of the $d_v$ edges collides with $e$ is:
\begin{equation} 
   P_2=(1-p_2)^{d_v}
\end{equation}
}

In summary, the probability that none of the $|E|$ edges collides with $e$ is:
\zyl{
\begin{align} 
    P= & P_1\times P_2 
     =(1-p_1)^{|E|-d_v}\times (1-p_2)^{d_v} \nonumber \\ &
     = e^{-p_1\times (|E|-d_v)}\times e^{-p_2\times d_v} \nonumber \\ &
     = e^{-\frac{q_1 + L q_2 + L^{2} q_3}{D^{2}L^{2}} \times (|E|-d_v)} \times e^{-\frac{h_1 + L h_2}{DL}\times d_v} \nonumber \\ &
     = e^{-\frac{(q_1 + L q_2 + L^{2} q_3)(|E|-d_v) + (DLh_1+DL^{2}h_2)d_v}{D^{2}L^{2}}}
\end{align}
}
and the probability $\tilde{P}$ that an edge encounters an edge collision is $1-P$.

\end{proof}
\zyl{For the sake of analysis, we assume that the node labels obey a uniform distribution, which leads to 
\begin{equation} 
   \left\{
             \begin{array}{lr}
             q_1 = \frac{(l-1)^{2}}{l^{2}}  \\
             q_2 = \frac{2(l-1)}{l^{2}}  \\
             q_3 = \frac{1}{l^{2}}  \\
             h_1 = \frac{l-1}{l} \\
             h_2 = \frac{1}{l} \\
             \end{array}
    \right.
\end{equation}
, where $l$ is the number of node labels. Thus, 
\begin{align} 
    P=  e^{-(\frac{L+l-1}{DLl})^{2}(|E|-d_v) - \frac{L+l-1}{DLl}d_v}
\end{align}
}

\zyl{In LSketch, \cs{supppose }we have $D = d \times F$, \cs{and }$L = t \times F$\cs{,} where $d$ is the width of the matrix and $F$ is the range of fingerprints.
Suppose we set $F = 256$ (with 8-bit fingerprints), \cs{and }$d=1000$. When working with a dataset with $|E| = 5 \times 10^{5}$ and $t=2$, we can get the probability $P=0.9996$ with $d_v = 200$. \cs{In another word, the probability that an edge encounters an edge collision is only 0.0004, which well meets the accuracy requirement in real applications.}

}

\subsection{Vertex Queries}
Vertex queries include outgoing edge weight queries and incoming edge weight queries of a vertex $v$, which are to estimate its outgoing weight or incoming weight (this is equivalent to out-degree and in-degree queries when we set all weights to 1). Additionally, the vertex queries can be restricted by an edge label $l_e$, thus the query result indicates the weight with a certain edge label sent/received by vertex $v$.
Here we only discuss the outgoing edge weight queries of a vertex $v$ since the incoming edge weight queries are similar.

\begin{algorithm}[htbp]
	\caption{AggregateVertexQueries}
	\label{al:node}
	\LinesNumbered
	\small{ \KwIn{a vertex $A$, a vertex label $l_A$, an edge label $l_e$ (optional)} }
	\KwOut{$w$: the outgoing weight of vertex \zzzyl{$A$}, $w_l$: the outgoing weight with label $l_e$ of vertex \zzzyl{$A$}; $sum$: the outgoing weight of all vertices with label $l_A$, $sum_l$: the outgoing weight with label $l_e$ of all vertices with label $l_A$}
	$w=0, w_l=0, sum=0, sum_l=0 $  \\
	
	$Precompute(A,l_A)$  \\
    \For{$i \leftarrow 1 \ldots r$}
    {
         $p_1 = (s(A)+l_{i}(A)) \% b$ \\
         \For{$j \leftarrow 1 \ldots d$}
         {
            \For{$shifter \leftarrow 1 \ldots 2$} 
            {
                \uIf{the stored index equals $i$ and the stored fingerprint matches $f(A)$ }
                {
                    {\tiny $w += GetWeightsInM(S[m\times b +p1][j], shifter) $}  \\
                    {\tiny $w_l += GetWeightsInM(S[m\times b +p1][j], shifter, l_e) $} \\
                }
            }
         }
    }  
    \For{\zyl{$i \leftarrow m\times b \ldots m\times b +b$}}
    {
        \For{$j \leftarrow 1 \ldots d$}
        {
            \For{$shifter \leftarrow 1 \ldots 2$}  
            {
                {\scriptsize $sum += GetWeightsInM(S[i][j], shifter) $ } \\
                {\scriptsize $sum_l += GetWeightsInM(S[i][j], shifter, l_e) $}  \\
            }
        }
    }
    $w= w + weight$ gained in additional pool \\
    $w_l=w_l + weight$ gained in additional pool \\
    $sum=sum + weight$ gained in additional pool \\
    $sum_l=sum_l + weight$ gained in additional pool \\
    return $w, w_l, sum, sum_l$
\end{algorithm}

The detailed algorithm is shown in Algorithm \ref{al:node}.
For the storage matrix, we first calculate the possible rows of vertex $v$ (lines 2-4) and then traverse all the columns. If the index and the fingerprint match successfully, we use \textsc{GetWeightsInM} to obtain the weights of all cells that meet the conditions (lines 5-9).
We also traverse the additional pool 
to find the leftovers. Recall the data structure described in Figure \ref{fig:pool}---the elements in the additional pool are linked by source nodes, and hence we need to examine the source nodes list.  
If the hash value of a linked-list head is the same as that of the queried vertex, then we have found a matched source node, and the total weight in that list should be added to $w$. The weight with label restriction is also obtained and added to $w_l$ (lines 15-16).

Furthermore, we can do the weight queries on a certain type of vertex since our sketch places vertices with the same vertex label (type) together.
We use the vertex label to get the starting row \zyl{ $m\times b$ (line 2) of the storage block, and thus the storage location of vertices with a certain label is $[m\times b,m\times b+b)$. }
For these $b$ rows, we traverse all the columns and use \textsc{GetWeightsInM} to sum the weights (lines 10-14).
Similarly, we traverse the additional pool to get the final result (lines 17-18).

Based on Theorem \ref{edgecollision}, for vertex queries, the accuracy of outgoing edge weight queries without label restriction is ensured by $P^{|V|-d_v}$, where $|V|$ denotes the number of vertices in $G$ and $d_v$ is the out-degree of the queried vertex.
The outgoing edge weight queries with label restriction involves the hashing of edge labels of range $[0,c)$, and thus its accuracy is ensured by $P^{|V|-d_v}\times (1-\frac{1}{c})^{l-1}$, where $l$ is the number of edge labels.

As for the query time, we need to traverse the possible $r$ rows of the queried vertex, sum up the weights from the $k$ subwindows of all matching cells, and finally supplement the weights in the additional pool. So the time complexity is $O(d\times r\times k + |AP|)$.

\subsection{Edge Queries}
Given an edge $e$ with endpoints $A$, $B$, and the corresponding vertex labels $l_A$ and $l_B$, an edge query estimates the edge weight of $e$. Particularly, the edge queries can be restricted by an edge label $l_e$, aiming to obtain the weight of edges of type $l_e$ between endpoints $A$ and $B$.
Similar to vertex queries, LSketch can handle the case where vertices are of certain types, e.g., edge queries between a vertex and vertices of a particular type, or between two particular types of vertices.

\begin{algorithm}[h]
	\caption{AggregateEdgeQueries}
	\label{al:edge}
	\LinesNumbered
	\small{ \KwIn{an edge $e=(A,B)$ with corresponding vertex labels $l_A$ and $l_B$, an edge label $l_e$ (optional)}  }
	\KwOut{$w_{vg}$: the weight between vertices $A$ and vertices of label $l_B$, $w_{vg\_l}$: the weight with label $l_e$ between vertices $A$ and vertices of label $l_B$ }
	
	$w_{vg}=0, w_{vg\_l}=0 $  \\
	$Precompute(A,l_A)$, $Precompute(B,l_B)$  \\
    \For{$i \leftarrow 1 \ldots r$}
    {
         $p_1 = (s(A)+l_{i}(A)) \% b$ \\
         \For{\zyl{$j \leftarrow m_B\times b \ldots m_B\times b+b$}}
         {
            \For{$shifter \leftarrow 1 \ldots 2$} 
            {
                $E = S[m\times b+p_1][j]$  \\
                \uIf{the stored index equals $i$ and the stored fingerprint matches $f(A)$ }
                {
                    {\scriptsize $w_{vg} += GetWeightsInM(E, shifter) $}  \\
                    {\scriptsize $w_{vg\_l} += GetWeightsInM(E, shifter, l_e) $} \\
                }
            }
         }
    }  
    $w_{vg}=w_{vg} + weight$ gained in additional pool \\
    $w_{vg\_l}=w_{vg\_l} + weight$ gained in additional pool \\
    return $w_{vg}, w_{vg\_l}$
\end{algorithm}

The detailed algorithm is shown in Algorithm \ref{al:edge}. 
The edge queries of a specific edge are similar to the insertion process; thus we only show the edge queries between a vertex and a particular type of vertices.
When performing edge queries between vertex $A$ and vertices with label $l_B$, we first traverse the $r$ rows of $A$'s possible locations generated by the candidate list technique. Then we calculate the starting column in the storage block of label $l_B$, and traverse columns from \zyl{$m_B\times b$ to $m_B\times b+b$}. If the index and the fingerprint are both matched, we sum the weights (lines 7-9). At last, we traverse the additional pool to get the final result (lines 10-11).


The accuracy of the edge queries without label restriction is ensured by exactly $P$ as calculated in Theorem \ref{edgecollision}, and the accuracy of the edge queries with edge label restriction is bounded by $P\times (1-\frac{1}{c})^{l-1}$, where $l$ is the number of edge labels and $c$ is the hash range of them.
Also, the time complexity of edge queries is the same as that of the insertion into LSketch, which is $O(1)$.

\subsection{Path Queries}
Given two vertices $A$ and $B$, a path query is to determine whether there is a reachable path from $A$ to $B$. Particularly, the path queries can be restricted by a specified edge label to find the existence of paths of certain types from $A$ to $B$.

Based on the structural information maintained by LSketch, our algorithm can be applied as a black box to any existing path reachability algorithm (or any other structure-based graph queries). We use BFS to illustrate the query as an example, and the detailed algorithm without edge label restriction is shown in Algorithm \ref{al:path}.

\begin{algorithm}[htbp]
	\caption{PathReachability}
	\label{al:path}
	\LinesNumbered
	\small{ \KwIn{vertices $A$ and $B$, and its corresponding vertex labels $l_A$ and $l_B$}  }
	\KwOut{whether there is a path from $A$ to $B$}
   
    $q = \{A\}$ // queue  \\ 
    $checked = \{A\}$   \\
    \While{$q$ is not empty}
    {
        $src$ = $q$.head \\
        
        $Precompute(src,l_{src})$, $Precompute(B,l_B)$  \\
        \For{$i \leftarrow 1 \ldots r$}   
        {
            $p_1 = (s(src)+l_{i}(src)) \% b$ \\
            \For{$j \leftarrow 1 \ldots r$}
            {
                $p_2 = (s(B)+l_{B_j}(B)) \% b$  \\
        		$E = S[\zyl{m_{src}}\times b+p_1][\zyl{m_B}\times b+p_2]$  \\
                \For{$shifter \leftarrow 1 \ldots 2$} 
                {
                    \uIf{the stored index pair and fingerprint pair match the ones of query vertices}
                    {
                        \uIf{ {\scriptsize$GetWeightsInM(E, shifter) >0$}}
                        {return true}
                    }
                }
            }
        }
        {\scriptsize Add all successors of $src$ in additional pool to $q$ and $checked$ } \\  
        \For{$i \leftarrow 1 \ldots r$}   
        {
            $p_1 = (s(src)+l_{i}(src)) \% b$ \\
            \For{$j \leftarrow 1 \ldots d$}
            {
        		$E = S[\zyl{m_{src}}\times b+p_1][j]$  \\
                \For{$shifter \leftarrow 1 \ldots 2$} 
                {
                    \uIf{the stored index equals $i$ and the stored fingerprint matches $f(src)$}
                    {
                        $H(temp)=s(temp)\times F+f(temp)$  \\
                        $q.push(H^{-1}(temp))$ \\ 
                        $checked.add(H^{-1}(temp))$ \\
                    }
                }
            }
        }
    }
\end{algorithm}

We define a queue $q$ and a checked list \zzzyl{$checked$} to implement BFS. 
After adding $A$ to $q$ and $checked$ (lines 1-2), we start to traverse $q$. The loop stops when $q$ is empty, and we return false. If the reachable path is found in advance, we return true.

In each loop, we first calculate the possible $r$ rows/columns for the extracted temporary starting vertex $src$ and $B$, respectively (lines 4-5). Then, we traverse the possible twin segments in turn to check whether the index pairs and the fingerprint pairs are both matched (lines 6-11). If so, we already find a reachable path from $A$ to $B$ and return true (lines 12-14); otherwise, we add all successors of the $src$ from the additional pool and the main matrix to $q$ and $checked$ (lines 15-24).
Note that we need to allocate storage space to store the initial address $s(v)$ of each node $v$.
And the function $H^{-1}$ receives a hash value and returns its corresponding vertex identifier.


The accuracy of the path queries without label restriction is ensured by $P^{|V|-d_a}$, where $P$ is calculated in Theorem \ref{edgecollision}, and $|V|$ and $d_a$ denote the total number and the average out-degree of vertices in $G$ respectively. The accuracy of the path queries with edge label restriction is bounded by $P^{|V|-d_a}\times (1-\frac{1}{c})^{l-1}$, where $l$ is the number of edge labels and $c$ is the hash range of them.
In addition, the time complexity of the queries is determined by the adopted path reachability algorithm.

\subsection{\zyl{Approximate} Subgraph Queries}
\zyl{LSketch preserves the graph connectivity, so the aggregated results of edges can actually represent the graph connectivity. 
Therefore, in order to improve the query efficiency, we 
use an approximate subgraph query to represent the subgraph query results with a certain degree of accuracy.}
Given a subgraph composed of a set of edges $G_{q}=\{e_1,...,e_n\}$, a subgraph query estimates the number of subgraph patterns that match. In our schema, an edge $e_i$ is expressed as $(x_i,l_{x_i},y_i,l_{y_i})$, which corresponds to its two vertices and related vertex labels.
Simply put, we only need to perform an edge query on each edge to get its matching weight, and then take the minimum of all weights as the number of matches.
The detailed algorithm is shown in Algorithm 6. Besides, the subgraph queries can be easily extended to the case with edge label constraints.

\begin{algorithm}[htbp]
	\caption{\zyl{Approximate}SubgraphQueries}
	\LinesNumbered
	\small{ \KwIn{a subgraph $G_{q}$ composed of a group of edges \zyl{$\{e_1,...,e_v\}$, where $e_i=(x_i,l_{x_i},y_i,l_{y_i})$    }}}
	\KwOut{the number of matches of the queried subgraph $G_{q}$}
   
    $res = MAX $ \\
    \For{$i \leftarrow 1 \ldots\zyl{v}$}
    {
        $w_i = AggregateEdgeQueries(x_i,l_{x_i},y_i,l_{y_i})$   \\
        \uIf{$w_i=0$}
        {
            return 0
        }
        \uIf{$w_i<res$}
        {
            $res = w_i$
        }
    }
    return $res$
\end{algorithm}

The accuracy of the \zyl{approximate} subgraph queries is similar to the edge queries, which is ensured by \zyl{$P^{v}$ and $P^{v}\times (1-\frac{1}{c})^{l-1}$} for cases without/with edge label restriction, where $l$ is the number of edge labels, $c$ is the hash range of them, and \zyl{$v$} denotes the size of the queried subgraph.
Also, the time complexity of subgraph queries is \zyl{$O(v)$}. 

\section{EXPERIMENTAL EVALUATION} \label{sec:experiments}
\subsection{Experimental Settings}

\noindent
\textbf{Datasets.} We use four real world datasets \yl{shown in Table \ref{table:windowsize}.}



\begin{table}[htbp] 
\vspace{-0.5em}
  \centering
  \fontsize{8}{8}\selectfont
  \caption{Statistics of the datasets}
    \begin{tabular}{|c|c|c|c|c|}
    \hline
  \textbf{Dataset} & \textbf{Phone} & \textbf{Road} & \textbf{Enron} & \textbf{com-FS} \\  
  \hline  
  \# of edges       & 60,765 & 870,757 & 2,064,442 & 1,806,067,135 \\ \hline
\# of vertex labels & 2      & --       & 11     & 20           \\ \hline
\# of edge labels & 4      & 6       & 35455     & 100           \\ \hline
window size       & 1 week & 1 day   & 1 week    & 1 day         \\ \hline
subwindow size    & 1 h    & 5 min   & 1 h       & 10 min        \\ 
  \hline  
    \end{tabular}
    \label{table:windowsize}
\end{table}

\noindent
\textbf{1) Phone data}: the MIT Reality dataset\yl{\footnote{https://crawdad.org/mit/reality/20050701}}.
The project recorded the communication, proximity, location, and activity information of 94 subjects from 2004 to 2005. Each item in the graph stream includes calling ID, receiving ID, call time, call type and call duration. 
We preprocess the graph stream, classify the vertices into two groups: research objects and others, and assign nine edge labels according to the type and duration of the call. The dataset has a total of 60,765 edges.
\textbf{2) Road data}: the real-time traffic speed map of the roads in Hong Kong\yl{\footnote{https://data.gov.hk/sc-data/dataset/hk-td-sm\_1-traffic-speed-map}}. 
We extract the IDs of the two endpoints, region, road type, road saturation and traffic speed of the road of each observation from the downloaded XML file, and model them as a graph stream according to the timestamps. Among them, we do not consider vertex labels and only assign six edge labels to describe the road situation according to its type and saturation. The dataset has a total of 870,757 edges.
\textbf{3) Email network data}: an email interaction dataset
with 2,064,442 edges\yl{\footnote{http://www.ahschulz.de/enron-email-data}}. We use the employee's position as the vertex label, resulting in a total of eleven types. The edge label of each item is the subject of the email, which is hashed and represented by a unique number.
\textbf{4) Social network data}: the Friendster social network from SNAP \cite{snap}. The dataset is a static network containing 1,806,067,135 edges. We generate a set of vertex labels with size 20 and a set of edge labels with size 100, and randomly allocate labels and timestamps to each item. The final size of the dataset is 32.9 GB. \cy{We use this semi-synthetic dataset to examine the scalability of the proposed method.}

\noindent
\textbf{Competitive Methods.}
\yl{
We choose GSS \cite{gou2022graph} and LGS \cite{Song2019LabeledGS} as our competitive methods. 
Although GSS cannot handle heterogeneous graphs or differentiate items arriving at different timestamps, it is the latest method that most relevant to ours with high accuracy. 
LGS is capable of processing labels and timestamps and is the state-of-the-art method working for heterogeneous graph streams.
We experimentally evaluate the efficiency and accuracy of LSketch against GSS and LGS. 
Among them, we compare LSketch with \cy{both of }GSS \cy{and} LGS in queries without label restriction\cy{s,} and with LGS \cy{only} in queries with label restriction\cy{s}.
}

\begin{figure*}[ht]
\centering
\setlength{\abovecaptionskip}{0.1cm} 
\setlength{\belowcaptionskip}{-0.23cm}   
\subfigure[{Phone} \label{fig:width_p}]{
\begin{minipage}[htbp]{0.3\textwidth}
\includegraphics[width=1\textwidth]{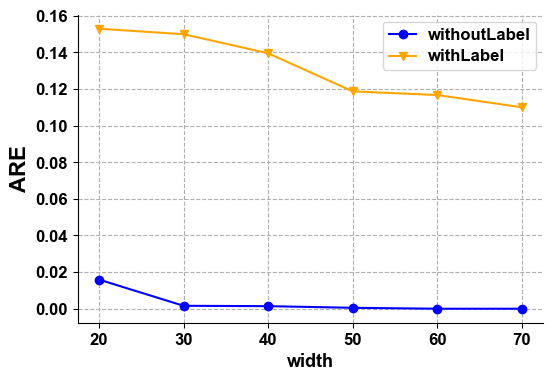}
\end{minipage}
}
\subfigure[{Road} \label{fig:width_r}]{
\begin{minipage}[htbp]{0.3\textwidth}
\includegraphics[width=1\textwidth]{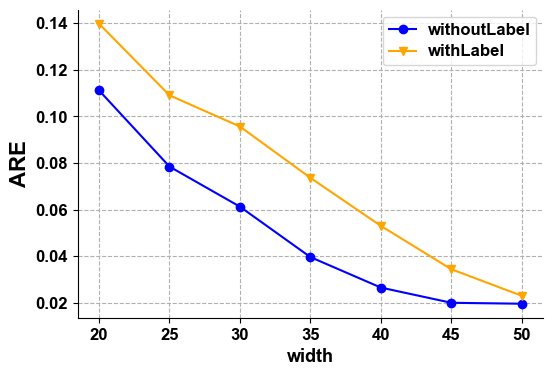}
\end{minipage}
}
\subfigure[Enron  \label{fig:width_e}]{
\begin{minipage}[htbp]{0.3\textwidth}
\includegraphics[width=1\textwidth]{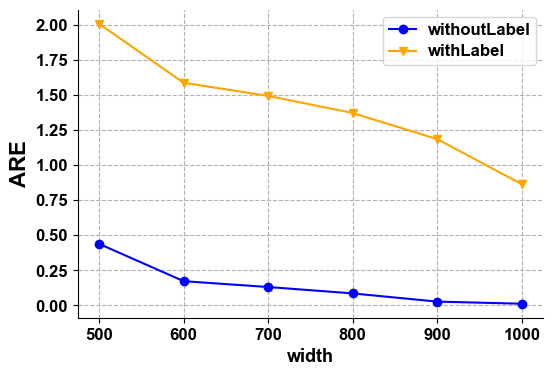}
\end{minipage}
}
\caption{Vertex queries varying $d$ of the matrix}
\end{figure*}

\noindent
\textbf{\cy{Parameter }
Settings.}
For a fair comparison, we set the item weight of all datasets to 1 to facilitate GSS and LGS to process these graph streams, and do not consider labels and timestamps in GSS. 
As for LGS, we use \yl{6} \cy{copies of }graph sketches to improve its accuracy.
\yl{For the three methods, $d$ is set to be the same, so LGS will use \cy{six }
times the storage space to compare with GSS and LSketch
.
Specifically, we will discuss how to set the size of $d$ for different datasets later.
\cy{In addition, }
the size of $b$ \cy{(for LGS and LSketch) }
are set according to $d$ and the number of vertex labels of the datasets, \cy{and }the fingerprint sizes differ from 8-16 bits. We set the length of the candidate list to 16 and the sampled list length $s$ to 16.}
According to the real applications of the graph streams and its overall time span, the window size and the subwindow size of each dataset is shown in Table \ref{table:windowsize}.

\noindent
\textbf{Metrics.}
We use \textbf{average relative error (ARE)} to describe the accuracy of vertex queries, edge queries and subgraph queries. Denote the estimation and the true results of a query as $res_s$ and $res$, respectively, and the relative error is \cy{computed } 
as $\frac{res_s-res}{res}$ \cy{since $res_s$ will be no less than $res$ due to the possible hash collisions}. For each type of query, we randomly select the query set and obtain the \cy{ARE from 500 repeated runs. }
For path queries, the error only occurs when the real result is false and the sketch returns true. Therefore, we use \textbf{accuracy} (the number of false positives/the number of queries) to measure \cy{the }
performance.

\noindent
\textbf{Setup.} We perform all our experiments on a desktop with 16GB memory, Intel Core i7 processor and 3.40GHz frequency. All algorithms including LSketch, GSS and LGS are implemented in C++.

\subsection{\cy{Evaluation on matrix width $d$}
}
\cy{In this section, we will discuss how to set the most important parameter, $d$--the width of the matrix, by showing the evaluation results varying $d$. }After determining $d$, we can simply calculate the size of the block based on the number of vertex labels in the dataset (when using the uniform blocking strategy). \cy{Time related prameters, such as subwindow and window size, are set according to actual application requirements. }
For other parameters, we follow the suggestions of GSS.

In general, the matrix is the main part that holds the edges and its capacity is $d*d*2$. Therefore, this capacity should be comparable to the number of distinct edges of the dataset. Taking the Phone dataset as an example, there are 4952 distinct edges, so the most suitable $d$ should be around 50. However, since our method involves the block division caused by the encoding of vertex labels, it is usually necessary to set a larger $d$ in order to obtain a higher accuracy. Following the above guidance, we set different $d$ to generate sketches and measure their performance using the results of vertex queries without/with edge label constraints, which is shown in Figure \ref{fig:width_p}. Note that in order to show the performance difference of sketches more clearly, we set \cy{a small fingerprint length. }
It can be seen that with the growth of $d$, the ARE of the query decreases. When $d$ reaches 60, there are no errors.

Similarly, it is calculated that the recommended \cy{$d$ }
for Road and Enron is 40 and 600, respectively. The experimental results are shown in Figure \ref{fig:width_r} and Figure \ref{fig:width_e}. The \cy{variation }
trend of ARE is in line with our inference, and the error around the recommended \cy{width }
is acceptable. 
\zyl{For GSS and LGS, the authors also discuss the matrix width $d$ in their papers respectively, and their experimental results are consistent with our conclusions. For all sketches, the ARE of the queries decreases as the matrix width increases and eventually reaches saturation.}

\cy{In practice, for streaming enabled scenarios, $d$ can be set according to the edge incoming rate within one window.}
\yl{In subsequent experiments, we use the recommended $d$ for each dataset to construct sketches.}

\subsection{\cy{System Throughput}
}
In this section, we \cy{examine the scalability of the sketches by investigating their system throughput.}
Since GSS \cy{is }
not \cy{able }
to handle timestamps, we use LGS and LSketch without sliding windows \cy{to show the overall comparison results}.
\yl{Table \ref{table:inserttime1} shows the average time \cy{each method}  takes to insert an item and the overall insertion time for each dataset}. It can be seen that on the first three datasets, the average insertion time of all methods is within the $\mu s$ level.
Among them, LSketch \cy{preserves much more information than GSS, and achieves much higher accuracy than LGS (shown in Section \ref{query evaluation}), thus may take }
a little longer loading time.
On very large datasets, all three methods take the \cy{$ms$} level average time to complete \cy{one }insertion.
This is because even if only around 1\% of the edges are stored in the additional pool, inserting and updating these edges is still costly due to the inefficiency of adjacency list. 
     

     

\begin{table}[htbp]
  \centering
  \fontsize{8}{8}\selectfont
  \caption{\yl{Average insertion time (us\cy{/edge}) and total insertion time (ms)}}
    \begin{tabular}{c|c|rrrr}
    \toprule
    \multirow{2}[4]{*}{\textbf{Time}} & \multirow{2}[4]{*}{\textbf{Methods}} & \multicolumn{4}{c}{\textbf{Datasets}} \\
\cmidrule{3-6}          &       & \multicolumn{1}{c|}{\textbf{Phone}} & \multicolumn{1}{c|}{\textbf{Road}} & \multicolumn{1}{c|}{\textbf{Enron}} & \multicolumn{1}{c}{\textbf{com-FS}} \\
    \midrule
    \multirow{3}[6]{*}{\makecell[c]{Average \\ (us/edge)}} & GSS   &  1.22 & 1.01 & 2.77  & 5.89 ms  \\
\cmidrule{2-6}          & LGS   & 1.56  & 1.22  & 11.43  & 6.77 ms  \\
\cmidrule{2-6}          & LSketch & 2.73  & 1.83  & 7.27 & 8.82 ms  \\
    \midrule
    \multirow{3}[6]{*}{\makecell[c]{Total \\ (ms)}} & GSS   &  74.3 & 875.3 &  5724.5 & 3210000 s \\
\cmidrule{2-6}          & LGS   & 94.9  & 1062.6  & 23593.5  & 3670000 s \\
\cmidrule{2-6}          & LSketch & 166.1  & 1591.2  & 14998.5 & 4780000 s  \\
    \bottomrule
    \end{tabular}%
  \label{table:inserttime1}%
\end{table}%

     

\begin{table}[htbp]
  \centering
  \fontsize{8}{8}\selectfont
  \caption{Insertion time (ms \cy{per edge/overall}) \cy{with sliding windows}}
    \begin{tabular}{c|c|r|r}
    \toprule
    \multirow{2}[4]{*}{\textbf{Time}} & \multirow{2}[4]{*}{\textbf{Methods}} & \multicolumn{2}{c}{\textbf{Datasets}} \\
\cmidrule{3-4}          &       & \multicolumn{1}{c|}{\textbf{Phone}} & \multicolumn{1}{c}{\textbf{Road}} \\
    \midrule
    \multirow{2}[4]{*}{\makecell[c]{Average \\ (ms/edge)}} & LGS   &   1.36    &  0.03 \\
\cmidrule{2-4}          & LSketch &   0.43    &  2.45 us\\
    \midrule
    \multirow{2}[4]{*}{\makecell[c]{Total \\ (ms)}} & LGS   &   82739.40    &   23468.90 \\
\cmidrule{2-4}          & LSketch &  26429.70     & 2137.80 \\
    \bottomrule
    \end{tabular}%
  \label{table:inserttime3}%
\end{table}%

Furthermore, to compare LSketch and LGS in more detail, we also performed the same experiments on Phone and Road \cy{datasets with sliding windows}. The results in Table \ref{table:inserttime3} show that 
\cy{LSketch }
is better suited for handling heterogeneous graph streams compared with LGS \cy{in aspect of time efficiency}.
The average insertion time of LSketch is an order of magnitude smaller than that of LGS\cy{.}

\zyl{Our approach is a universal storage structure that supports a wide range of graph queries. Therefore, it has more powerful features compared to those structures for specific graph queries optimization. In particular, with the introduction of sliding windows, we are able to support various types of structured queries under time region constraints, which is of great practical significance.}

\subsection{\cy{Evaluation on Query Answering}} \label{query evaluation}

\noindent
\textbf{1) Time Efficiency.}
Sketches show a significant advantage in supporting queries in terms of time efficiency, which is several orders of magnitude times faster than querying on raw data. The results of vertex queries and edge queries are shown in \yl{Table \ref{table:queriestime}} for an illustration.
During the query process, LSketch needs to further deal with two vertex labels, and thus the time consumption is slightly higher than that of GSS. However, such disparity is almost negligible, since all queries can be finished within the $\mu s$ level when using sketches. The time consumption of other types of queries is similar, and thus we omit \cy{them }
for clarity.

     

     

\begin{table}[htbp]
  \centering
  \fontsize{8}{8}\selectfont
  \caption{\yl{The response time of the vertex queries and edge queries }}
    \begin{tabular}{c|c|rrr}
    \toprule
    \multirow{2}[4]{*}{\textbf{Queries}} & \multirow{2}[4]{*}{\textbf{Methods}} & \multicolumn{3}{c}{\textbf{Datasets}} \\
\cmidrule{3-5}          &       & \multicolumn{1}{c|}{\textbf{Phone}} & \multicolumn{1}{c|}{\textbf{Road}} & \multicolumn{1}{c}{\textbf{Enron}} \\
    \midrule
    \multirow{4}[5]{*}{\makecell[c]{Vertex \\ queries}} & Raw data   & 73.90 ms & 934.58 ms & 8388.80 ms  \\
\cmidrule{2-5}          & GSS   &  4.18 us & 5.47 us & 72.87 us  \\
\cmidrule{2-5}          & LGS & 2.21 us & 3.98 us & 68.88 us  \\
\cmidrule{2-5}          & LSketch & 4.67 us & 5.13 us & 76.87 us  \\
    \midrule
    \multirow{4}[5]{*}{\makecell[c]{Edge \\ queries}} & Raw data     & 332.59 ms & 3611.56 ms & 32208.23 ms \\
\cmidrule{2-5}          & GSS &  1.07 us & 1.19 us & 6.30 us \\
\cmidrule{2-5}          &  LGS & 1.60 us & 1.38 us & 6.43 us \\
\cmidrule{2-5}     &  LSketch & 3.04 us & 1.05 us & 10.77 us    \\
    \bottomrule
    \end{tabular}%
  \label{table:queriestime}%
\end{table}%

\noindent
\textbf{2) The Accuracy of Queries When Ignoring Timestamps. }

\textbf{Vertex queries.}
Given the queried vertices, we perform out-degree queries on three sketches of all datasets, and the results are shown in Figure \ref{fig:node}.
It can be seen that our method is much better than LGS on all datasets. Since we preserve more information of the graph streams with no extra storage space\cy{, the }
accuracy is slightly worse than GSS on the Phone dataset. On other datasets, LSketch achieves the same accuracy as GSS, and this demonstrates that our method is efficient and accurate in vertex label preservation.

We also calculate the error rate guarantees of all datasets. The probability $P$ of the Phone dataset is $0.999352$, and the error rate of vertex queries is guaranteed to be less than $0.219$, which is much higher than the actual running value. On other datasets, the results are similar. The probability that edges do not collide is around $0.9$, and the calculated error rate is guaranteed to be very small.

In addition, to demonstrate the superiority of LSketch in maintaining heterogeneous information, we perform the vertex queries with edge label restrictions, which is shown in Figure \ref{fig:node_label}. 
We only show the results of LSketch and LGS, since GSS does not preserve label information, and hence is not able to support label \cy{constrained }
queries.
The performance is good because we set a larger $c$ for all datasets{, where }\yl{$c$ is the length of the predefined list of prime numbers.}
The larger the $c$, the smaller the probability of edge label collisions, contributing to a lower error rate. We can see that LSketch outperforms LGS quite a bit under the same parameter settings. Moreover, there's no need for LSketch to use multiple sketches to improve query accuracy, demonstraing LSketch's abilities of information preservation, space saving, and query answering efficiency.

\begin{figure*}[htbp]
\centering
\setlength{\belowcaptionskip}{-0.23cm}   
\subfigure[{Vertex queries} \label{fig:node}]{
\begin{minipage}[htbp]{0.22\textwidth}
\includegraphics[width=\textwidth]{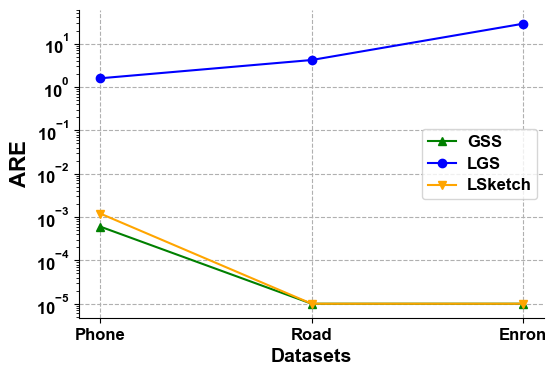}
\end{minipage}
}
\subfigure[{Vertex queries (lc)} \label{fig:node_label}]{
\begin{minipage}[htbp]{0.22\textwidth}
\includegraphics[width=\textwidth]{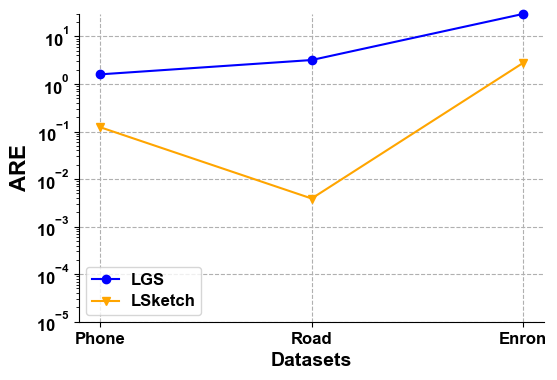}
\end{minipage}
}
\subfigure[Edge queries \label{fig:edge}]{
\begin{minipage}[htbp]{0.22\textwidth}
\includegraphics[width=\textwidth]{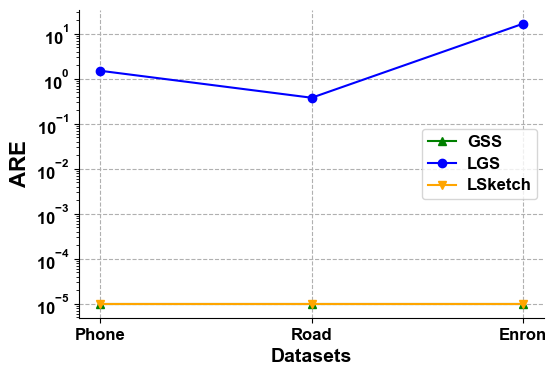}
\end{minipage}
}
\subfigure[Edge queries (lc) \label{fig:edge_label}]{
\begin{minipage}[htbp]{0.22\textwidth}
\includegraphics[width=\textwidth]{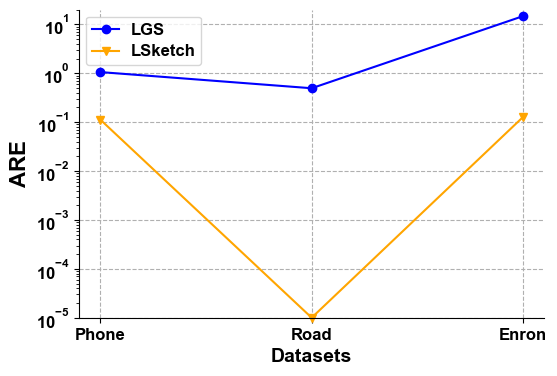}
\end{minipage}
}
\subfigure[{Path queries} \label{fig:path}]{
\begin{minipage}[htbp]{0.22\textwidth}
\includegraphics[width=\textwidth]{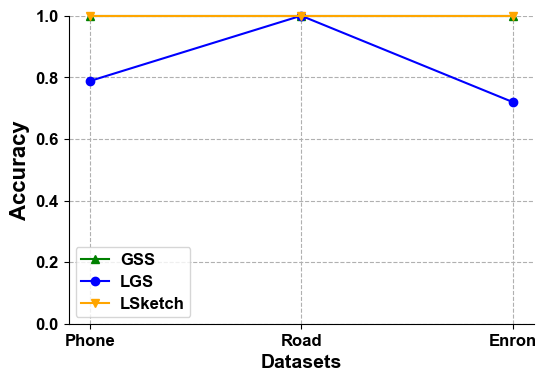}
\end{minipage}
}
\subfigure[{Path queries (lc)} \label{fig:path_label}]{
\begin{minipage}[htbp]{0.22\textwidth}
\includegraphics[width=\textwidth]{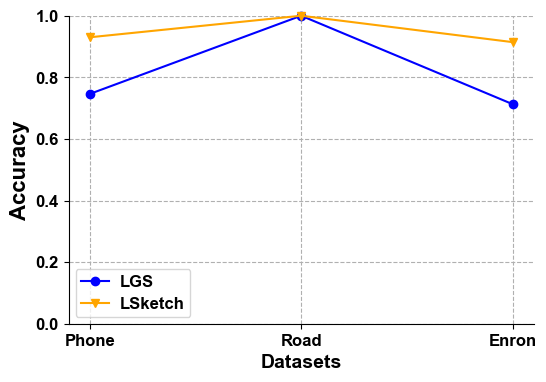}
\end{minipage}
}
\subfigure[Subgraph queries  \label{fig:sub}]{
\begin{minipage}[htbp]{0.22\textwidth}
\includegraphics[width=\textwidth]{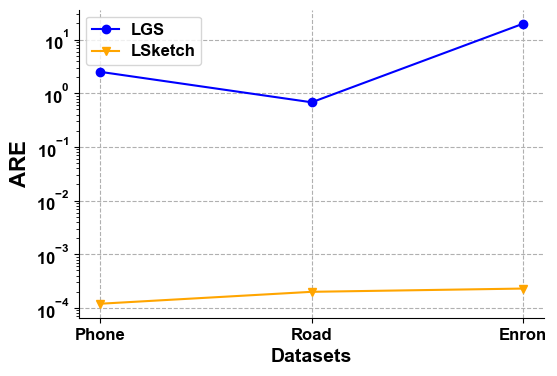}
\end{minipage}
}
\subfigure[\tiny Subgraph queries (lc) \label{fig:sub_label}]{
\begin{minipage}[htbp]{0.22\textwidth}
\includegraphics[width=\textwidth]{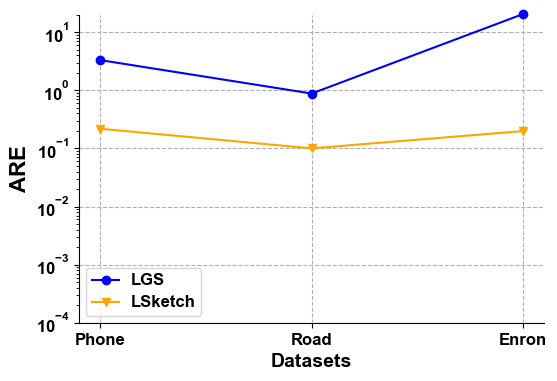}
\end{minipage}
}
\caption{The performances of the queries for three sketches without sliding windows}
\end{figure*}

\textbf{Edge queries.}
Next, we show the results of edge queries over the three sketches.
From Figure \ref{fig:edge}, we can see that the three methods all perform well on edge queries. Particularly, LSketch and GSS are both nearly $100\%$ accurate on all datasets. LGS does not have the ability to distinguish different items stored in the same location, \cy{thus }
its error rate is slightly higher.
The error rate guarantee equals $1-P$, which is the probability that an edge collides with other edges, and can be maintained below $0.1$ on all datasets.
The results of edge queries with edge label restrictions are similar and are shown in Figure \ref{fig:edge_label}.

\textbf{Path queries.}
Now we evaluate the performance of LSketch in supporting reachability queries in Figures \ref{fig:path} and \ref{fig:path_label}. 
The accuracy of the Road dataset is always 1 since it is a connected graph. For other datasets, the accuracy of LSketch is comparable to GSS and is much better than LGS. 
The accuracy guarantee of the path queries is the same as that of vertex queries as analyzed \cy{above.}

\textbf{\zyl{Approximate} Subgraph queries.}
Since the basic version of GSS does not support subgraph queries, we only compare the results with LGS.
The approximate matching query that we implement is a repeated execution of the edge query; hence the results of the subgraph queries are quite similar to those of the edge queries, as shown in Figures \ref{fig:sub} and \ref{fig:sub_label}.

\noindent
\textbf{3) The Accuracy of Queries After Introducing the Sliding Window. }
Figure \ref{fig:window_exper} shows the results of vertex queries and edge queries on LSketch and LGS, where 'lc' means that the query is under label constraints. It can be seen that with the introduction of the sliding window, the ARE of LSketch is further reduced, far superior to LGS. Other results are omitted due to space constraints.

\begin{figure}[htbp]
\centering
\setlength{\abovecaptionskip}{0.1cm} 
\setlength{\belowcaptionskip}{-0.23cm}   
\subfigure[{Vertex queries} \label{fig:nodewindows}]{
\begin{minipage}[htbp]{0.225\textwidth}
\includegraphics[width=\textwidth]{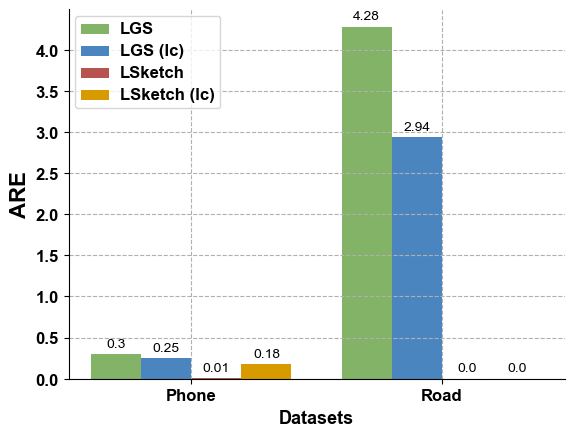}
\end{minipage}
}
\subfigure[{Edge queries} \label{fig:edgewindows}]{
\begin{minipage}[htbp]{0.225\textwidth}
\includegraphics[width=\textwidth]{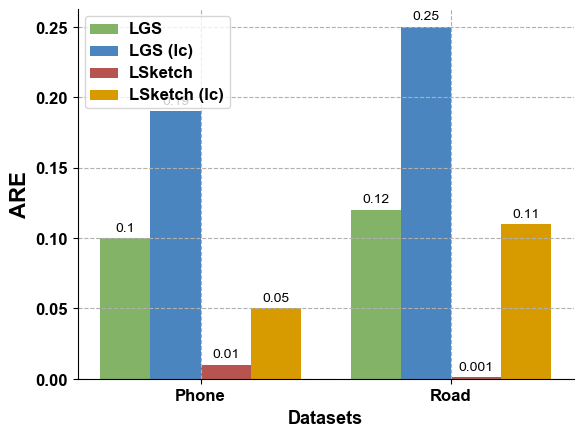}
\end{minipage}
}
\caption{The performances for three sketches with sliding windows}
\label{fig:window_exper}
\end{figure}

\section{Conclusion}
In this paper, we propose a novel structure LSketch for graph stream summarization. It employs a sliding window mechanism and works for heterogeneous graph streams, which is more in line with the needs of real applications.
LSketch only takes a sub-linear storage space and $O(1)$ update cost. It preserves the underlying structure and the label information of graph streams, enabling it to support multiple types of structure based queries.
The experimental results show that LSketch enjoys a great improvement in query accuracy and response speed compared to LGS.
In addition, compared to the state-of-the-art method GSS, which works for homogeneous graph streams, our proposed method further maintains the storage of labels and timestamps with a slightly more time cost.
The above experimental results and theoretical analyses fully demonstrate the superiority of LSketch.

\ifCLASSOPTIONcompsoc
  \section*{Acknowledgments}
\else
  \section*{Acknowledgment}
\fi

This work was supported in part by NSFC under the grants 62172237, 61772289, U1836109, U1936206, U1936105 and 62077031; NSF grant IIS-1633271, and New England Transportation Consortium project 20-2.

\ifCLASSOPTIONcaptionsoff
  \newpage
\fi



\bibliographystyle{IEEEtran}
\bibliography{ref}
%



%

\begin{IEEEbiography}[{\includegraphics[width=1in,height=1.25in,clip,keepaspectratio]{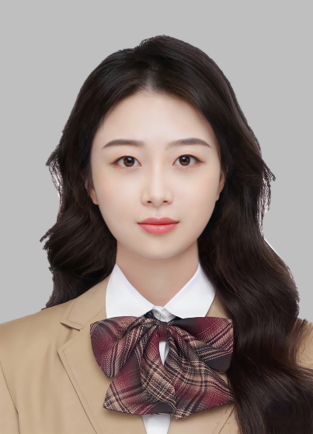}}]{Yiling Zeng}
received her BS degree from Nankai University, China in 2020. She is currently a master candidate at Nankai University. Her research interests include graph summarization and graph sketches on graph streams.
\end{IEEEbiography}
\vspace{-1.2cm}

\begin{IEEEbiography}[{\includegraphics[width=1in,height=1.25in,clip,keepaspectratio]{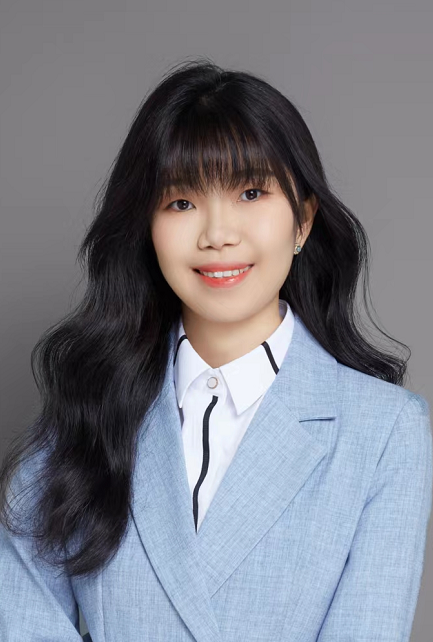}}]{Chunyao Song}
is an associate professor in Department of Computer Science at  Nankai University. Her research interests include graph data processing and  analysis, as well as streaming data processing and analysis.
\end{IEEEbiography}
\vspace{-1.2cm}

\begin{IEEEbiography}[{\includegraphics[width=1in,height=1.25in,clip,keepaspectratio]{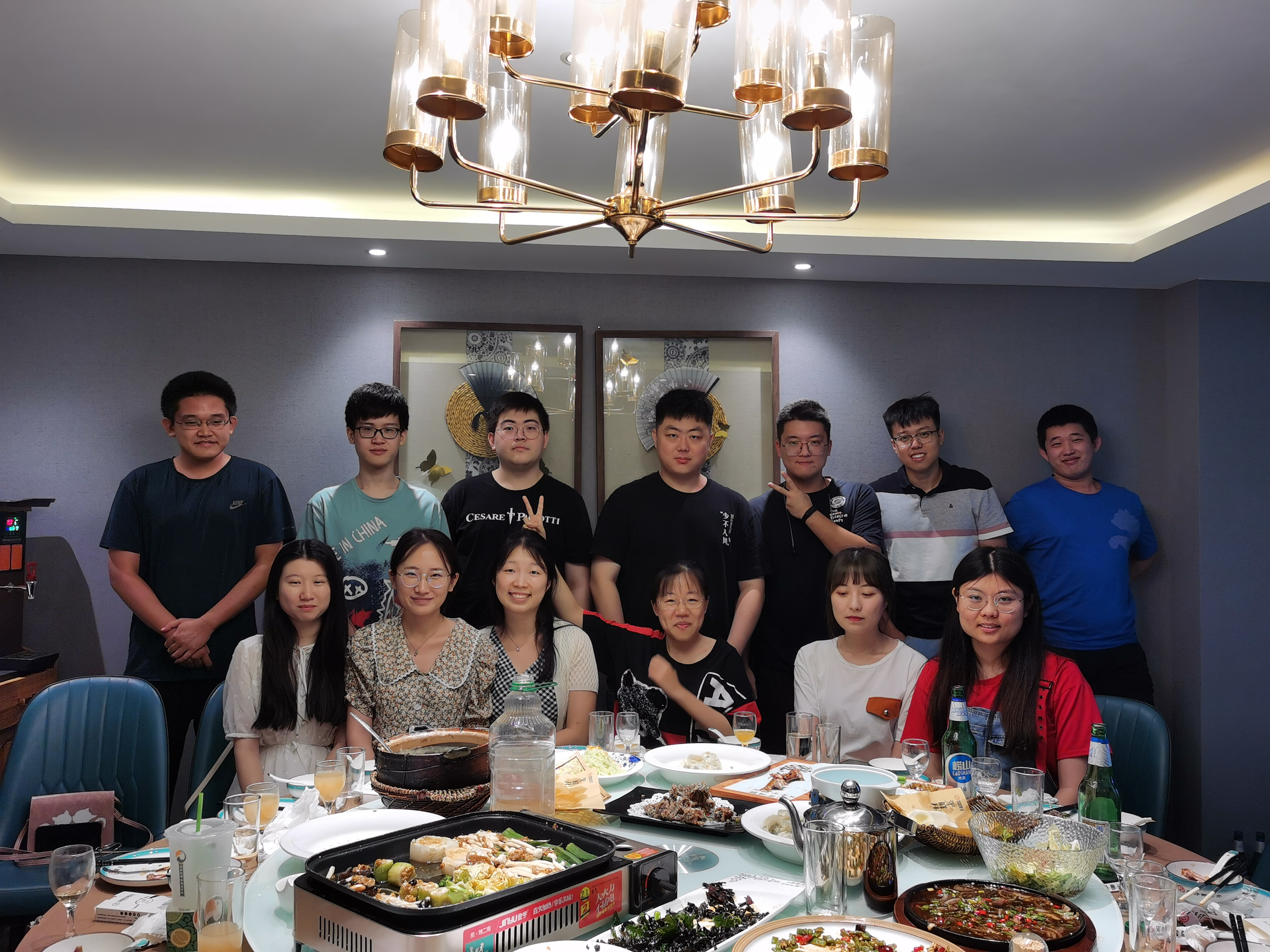}}]{Yuhan Li} received his BS degree from Northeast Forestry University, China in 2020. He is currently a master candidate at Nankai University. His research interests include knowledge graph, entity linking and data mining.
\end{IEEEbiography}
\vspace{-1.2cm}

\begin{IEEEbiography}[{\includegraphics[width=1in,height=1.25in,clip,keepaspectratio]{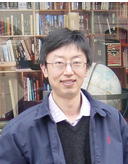}}]{Tingjian Ge}
is a professor in Department of Computer Science at the University of Massachusetts Lowell. His research interests include data streams, graphs and graph streams, noisy and uncertain data, biomedical data analysis, and data security and privacy.
\end{IEEEbiography}


\vfill


\end{document}